\documentclass[a4paper, 12 pt, twoside, reqno]{amsart}

\usepackage{tikz}

\usetikzlibrary{arrows.meta,positioning}

\tikzset{
  >=Stealth,
  arr/.style={->, line width=1.1pt},
  lab/.style={midway, fill=white, inner sep=1pt},
  bluelab/.style={lab, text=blue!70!black},
  op/.style={text=black}
}

\newcommand{\Star}[4]{%
\begin{tikzpicture}[baseline={(current bounding box.center)}]
  \coordinate (C) at (0,0);
  \coordinate (L) at (-1.2,0);
  \coordinate (R) at ( 1.2,0);
  \coordinate (U) at (0, 1.2);
  \coordinate (D) at (0,-1.2);

  \draw[arr] (L) -- node[bluelab,above] {$#1$} (C);
  \draw[arr] (C) -- node[bluelab,below] {$#4$} (R);

  \draw[arr] (C) -- node[bluelab,right] {$#2$} (U);
  \draw[arr] (D) -- node[bluelab,left] {$#3$} (C);
\end{tikzpicture}%
}

\newcommand{\Plaquette}[4]{%
\begin{tikzpicture}[baseline={(current bounding box.center)}]
  \coordinate (TL) at (0,1.1);
  \coordinate (TR) at (1.1,1.1);
  \coordinate (BR) at (1.1,0);
  \coordinate (BL) at (0,0);

  \draw[arr] (TL) -- node[bluelab,above] {$#1$} (TR);
  \draw[arr] (BR) -- node[bluelab,right] {$#2$} (TR);
  \draw[arr] (BL) -- node[bluelab,below] {$#3$} (BR);
  \draw[arr] (BL) -- node[bluelab,left] {$#4$} (TL);
\end{tikzpicture}%
}

\usepackage[utf8]{inputenc}
\usepackage{slashed}
\usepackage[english]{babel}
\usepackage[all]{xy}
\usepackage{enumerate}
\usepackage{setspace}
\usepackage{hyperref}

\usepackage[euler-digits,euler-hat-accent]{eulervm}
\usepackage{amsthm}

\usepackage{times}
\usepackage{amsmath}
\usepackage{amsfonts}
\usepackage{amssymb}
\usepackage{amsthm}
\usepackage{graphicx}
\usepackage{array}
\usepackage{color}
\usepackage{mathrsfs}

\usepackage{eucal}
\usepackage{esint} %%% for \fint
\usepackage{tikz}
\usepackage{upgreek}

\usepackage{mathtools}

\allowdisplaybreaks

\theoremstyle{plain}
\newtheorem{theorem}{Theorem}[section]

\newtheorem{corollary}[theorem]{Corollary}
\newtheorem{proposition}[theorem]{Proposition}
\newtheorem{lemma}[theorem]{Lemma}

\theoremstyle{definition}
\newtheorem{definition}[theorem]{Definition}

\theoremstyle{remark}
\newtheorem{remark}[theorem]{Remark}

\numberwithin{equation}{section}
\numberwithin{figure}{section}
\numberwithin{table}{section}

\newcommand{\SM}{{\mathtt s}}
\newcommand{\Aa}{\mathcal{A}}
\newcommand{\A}{\mathcal{A}}

\newcommand{\Cc}{\mathcal{C}}

\newcommand{\R}{\mathbb{R}}
\newcommand{\N}{\mathbb{N}}
\newcommand{\C}{\mathbb{C}} 

\newcommand{\Z}{\mathbb{Z}}
\newcommand{\T}{\mathbb{T}}

\newcommand{\G}{\mathcal{G}}
                 
\newcommand{ \ii}{\,\mathrm{i}\,}
\newcommand{\Ll}{\mathcal{L}}
\newcommand{\Bb}{\mathcal{B}}
                \newcommand{\ie}{\textsl{i.\,e.\,}}

\newcommand{\Kk}{\mathcal{K}}

\DeclareMathOperator{\supp}{supp}
\DeclareMathOperator{\sign}{sign}
\newcommand{\Hh}{\mathcal{H}}

\hypersetup{
pdftoolbar=true,        % show Acrobat’s toolbar?
pdfmenubar=true,        % show Acrobat’s menu?
pdffitwindow=true,     % window fit to page when opened
pdfstartview=true,    % fits the width of the page to the window
pdftitle={Quantum doubles},    % title
pdfauthor={Danilo Polo Ojito},     % author
breaklinks=true, %permet le retour a la ligne dans les liens trop longs
colorlinks=true,       % false: boxed links; true: colored links
linkcolor=purple,         % color of internal links (change box color 
citecolor=teal, % color of links to bibliography
urlcolor=blue, %couleur des hyperliens
bookmarksopen=true, %si les signets Acrobat sont crees,
filecolor=magenta,      % color of file links
}

\begin{document}

\title{The Full Set of KMS-States for Abelian Kitaev Models}

% \author[N. Higson]{Nigel Higson}
% \address{Department of Mathematics, Pennsylvania State University\\ University Park, PA 16802, USA
% \\
% \href{mailto:higson@psu.edu}{higson@psu.edu}}

\author[D. P. Ojito]{Danilo Polo Ojito}
\address{Department of Physics, Universidad de los Andes
	\\Bogota, Colombia \\
	\href{mailto:d.poloo@uniandes.edu.co}{d.poloo@uniandes.edu.co}}

\author[E. Prodan]{Emil Prodan}

\address{Department of Physics and
 Department of Mathematical Sciences 
\\Yeshiva University, New York, NY 10016, USA \href{mailto:prodan@yu.edu}{prodan@yu.edu}
}

\vspace{2mm}

\date{\today}

\maketitle

\begin{abstract}
We first prove that the subalgebra $\Cc$ generated by the vertex and face operators of an abelian Kitaev model is a $C^\ast$-diagonal of the UHF algebra $\Aa$ of quasilocal observables. This gives us access to the Weyl groupoid $\G_\Cc$ associated with the $C^\ast$-inclusion $\Cc \hookrightarrow \Aa$, which supplies a valuable presentation of $\Aa$ as a groupoid $C^\ast$-algebra where the dynamics of the model are generated by a groupoid 1-cocycle $c_H$. Making appeal to the notion of $(c_H,\beta)$-KMS measures for this groupoid, we identify the full set of KMS states of the model and prove its uniqueness for $\beta \in [0,\infty)$. Furthermore, we show that its limit at $\beta \rightarrow \infty$ exists and coincides with the unique frustration-free ground state of the model.

\medskip
\noindent
{\bf MSC 2020}:
Primary: 46L55, 82B10; Secondary: 37A55, 22A22, 46L05.\\
\noindent
{\bf Keywords}:
{\it Quantum double models, Weyl groupoid, KMS states, $C^*$-diagonals}

\end{abstract}
\tableofcontents
\section{Introduction}
In his influential paper, A.~Kitaev introduced the \emph{quantum double models}, a class of topologically ordered quantum spin systems defined on triangulations of two-dimensional surfaces \cite{Kit1}. The input data for these models is just a finite group $G$, from which Kitaev constructs a local Hamiltonian consisting of commuting projections and whose ground states exhibit long-range entanglement and support (quasi-)particle excitations with braid (anyonic) statistics described by the representation category of the quantum double algebra $\mathscr{D}(G)$ \cite{BHNV1,Kit2,Wang,Naaij1,Naaij2}. Another important feature is its topological nature: the ground space degeneracy depends on the topology of the underlying surface \cite{Kit1}.

\medskip 
 
When the surface coincides with the $2$-dimensional plane, significant progress has been made in the understanding of equilibrium states of these models at low temperatures.  In this case, the ground state which minimizes the energy locally is non-degenerate, and consequently there exists a unique \emph{frustration-free} ground state separated by a spectral gap from the rest of the spectrum \cite{Kit1,Naaij1,Penneys}. In the infinite-volume limit, however, additional ground states may arise for which the frustration-free condition no longer holds.  The abelian case is completely understood: the manifold of infinite-volume ground states decomposes into $|G|^2$ distinct sectors corresponding to the different types of abelian anyons (i.e., superselection sectors) \cite{CNN}. More recently, this result was extended in the PhD thesis \cite{Hamdan2024} to the non-abelian setting, where a family of ground states labeled by representations of the quantum double $\mathscr{D}(G)$ was constructed. It remains an open problem whether this family exhausts the full set of ground states in the non-abelian case. 

At positive temperatures, much less is known about the equilibrium states, \ie the set of KMS states. In \cite{AlickiJPA2007}, the authors analyze the simplest case $G=\mathbb{Z}_2$ and provide strong evidence for the existence of a unique KMS state at any inverse temperature $\beta$. Their formal argument relies on a reduction to the commutative sub-$C^\ast$-algebra $\Cc$ generated by star and face operators of the toric code, which reveals a connection with the free Ising model. The known results for the latter model imply the uniqueness of the KMS states for the toric code at finite $\beta$. To the best of our knowledge, nothing is known beyond the $G=\mathbb Z_2$ case.

Reference \cite{AlickiJPA2007} made us aware of the relevance of $C^\ast$-inclusions, specifically of the embedding of $\Cc$ into the $C^\ast$-algebra $\Aa$ of quasilocal observables, for the analysis of Kitaev models. In \cite{DE1}, we pointed out that there is a well-developed framework \cite{Kum,Ren1} to investigate such $C^\ast$-inclusions: If $\Cc$ turns out to be a $C^\ast$-diagonal of $\Aa$, then the latter accepts a presentation as the $C^\ast$-algebra of a groupoid whose unit space coincides with the Gelfand spectrum of $\Cc$. In the present work, we show that this is indeed the case for general Abelian Kitaev models. Furthermore, we use a result by Komura \cite{KomuraDocMath2025} together with the fact that the dynamics generated by an Abelian Kitaev model leaves $\Cc$ pointwise invariant, to show that the dynamics is induced from a groupoid 1-cocycle. In turn, this finding enables us to place the arguments of \cite{AlickiJPA2007} in a proper and more general context. Indeed, the characteristics of the mentioned groupoid and a result by J. Renault \cite{RenaultBook} ensure that any KMS state on $\Aa$ is induced from a KMS measure on $\Cc$. We devise an algorithm to compute these measures and to ultimately prove the existence and uniqueness of the KMS states for general finite Abelian groups $G$  at every inverse temperature $\beta \in [0,\infty)$. An important implication of our result is that, in the zero-temperature limit $\beta\to \infty$, the family of KMS states $(\SM_\beta)_{\beta\in[0,\infty)}$ converges to the unique frustration-free ground state.

\medskip 

We finally note that any abelian subalgebra $\Bb \subset \Aa$ can be embedded into a maximal abelian subalgebra $\tilde{\Bb} \subset \Aa$ by an application of Zorn’s lemma. This suggests that, in principle, the techniques developed in this work may extend to any model generated by commuting projections. However, in practice, identifying such a maximal abelian subalgebra is non-trivial, as it requires constructing a maximal family of commuting observables. A particularly interesting direction, currently under investigation, is the case of non-abelian Kitaev models, where the commutation relations are more intricate.

\medskip

\noindent{\bf Acknowledgements:} This work was supported by the U.S. National Science Foundation through the grant CMMI-2131760, and by U.S. Army Research Office through contract W911NF-23-1-0127. The authors acknowledge fruitful discussions with Nigel Higson and Jaime Gomez.

\section{$C^\ast$-Diagonals Generated by Abelian Kitaev Models}

The first part of this section introduces the geometric and algebraic fabric of Abelian Kitaev models, as well as the basic operators, such as the vertex and face local operators and the Hamiltonian itself. The second part of the section is focused on the commutative sub-$C^\ast$-algebra generated by the vertex and face operators, whose space of pure states is analyzed in detail. The last part of the section is dedicated to the proof that this commutative sub-$C^\ast$-algebra is a $C^\ast$-diagonal of the algebra of quasilocal observables.

\subsection{Abelian Kitaev models} We consider abelian Kitaev models \cite{Kit1} over the square lattice $\Ll=\Z^2$, whose sets of edges $e$ and of vertices $v$ are denoted by $E$ and $V$, respectively, and endowed with the discrete topology. We write $e \ni v$ to indicate that edge $e$ emanates from the vertex $v$. A natural orientation is fixed on $E$ by choosing all vertical edges to point upwards, and the horizontal ones pointing to the right, see figure \ref{fig: lattice}. An equally important role is played by the dual graph $\tilde \Ll$, which is equipped with the natural dual orientation. The faces of $\Ll$ will be oriented counterclockwise and will be identified with the vertices of $\tilde \Ll$. The notation $e\in \tilde v$ will specify that edge $e\in E$ belongs to the boundary of face $\tilde v$. Moreover, $\Ll$ and $\tilde \Ll$ will be drawn in the same plane, to make sense of statements like $\rho \cap \tilde \rho \neq \emptyset$ for paths of the direct and dual lattices. We denote by $\Kk(X)$ the family of compact subsets of a topological space $X$, and by $1_S$ the indicator function of a subset $S\subseteq X$.

\begin{figure}
    \centering
    \begin{tikzpicture}[scale=1.2]

\def\N{3}
\def\eps{0.5}

% define N-1, N-2 evaluados
\pgfmathtruncatemacro{\NmOne}{\N-1}
\pgfmathtruncatemacro{\NmTwo}{\N-2}

% vertices primal
\foreach \x in {0,...,\N} {
  \foreach \y in {0,...,\N} {
    \fill (\x,\y) circle (1.2pt);
  }
}

% horizontal primal (right)
\foreach \x in {0,...,\NmOne} {
  \foreach \y in {0,...,\N} {
    \draw[->, thick] (\x,\y) -- (\x+1,\y);
  }
}

% vertical primal (up)
\foreach \x in {0,...,\N} {
  \foreach \y in {0,...,\NmOne} {
    \draw[->, thick] (\x,\y) -- (\x,\y+1);
  }
}

% vertices dual
\foreach \x in {-1,0,...,\NmOne} {
  \foreach \y in {-1,0,...,\NmOne} {
    \fill[red] (\x+\eps,\y+\eps) circle (1.2pt);
  }
}

% horizontal dual (left)
\foreach \x in {-0,...,\NmOne} {
  \foreach \y in {-1,...,\NmOne} {
    \draw[->, red, thick] (\x+\eps,\y+\eps) -- (\x-1+\eps,\y+\eps);
  }
}

% vertical dual (up)
\foreach \x in {-1,...,\NmOne} {
  \foreach \y in {-1,...,\NmTwo} {
    \draw[->, red, thick] (\x+\eps,\y+\eps) -- (\x+\eps,\y+1+\eps);
  }
}
\end{tikzpicture}
    \caption{Black arrows represent the lattice  $\Ll=\mathbb{Z}^2$ with its chosen orientation, while the dual lattice $\tilde{\Ll}$ is depicted  by red arrows.}
    \label{fig: lattice}

\end{figure}
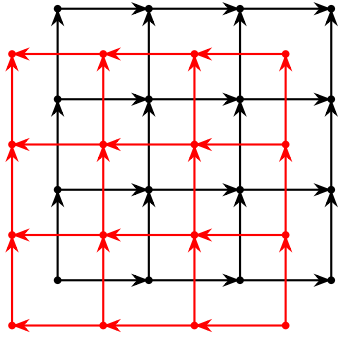
 
 Let $G$ be a non-trivial finite abelian group and denote by $\widetilde{G}$ its Poincaré dual. Each edge $e \in E$ carries the finite-dimensional Hilbert space $\Hh_e=\ell^2(G)$ and the matrix algebra $\mathcal{A}_e=\mathbb{B}(\ell^2(G))\simeq {\rm M}_{n}(\C)$, $n=|G|$. For any $\Lambda \in \Kk(E)$, we may define the finite Hilbert space $\Hh_\Lambda:=\bigotimes_{e\in \Lambda}\Hh_e$ and the $C^\ast$-algebra $\A_\Lambda=\mathbb{B}(\Hh_\Lambda)$ of linear operators over $\Hh_\Lambda$. The local algebra of observables is given by
$\A_{\rm loc}:=\bigcup_{\Lambda\in \Kk(E)} \A_{\Lambda}$
with natural inclusions $a\mapsto a\otimes {\bf1}_{\Lambda\setminus \Lambda'}$ for $\Lambda'\subset \Lambda$. The $C^\ast$-closure of this algebra is called the \emph{quasilocal algebra} of observables, and we shall denote it by $\A$. Obviously, it is isomorphic to the UHF algebra ${\rm M}_{n^\infty}$.

For any element $g\in G$, we denote its inverse by $\bar{g}$. Consider the unitary operators acting on $\ell^2(G)$ by
$$T_g|h\rangle\;=\;|gh\rangle, \qquad M_\chi|h\rangle \;=\;\chi(h)|h\rangle,\qquad h,g\in G,\;\chi\in\widetilde{G}$$
which satisfy the following relations
\begin{equation}\label{eq: elementary ribbon}
    T_g^n\;=\;{\bf 1}\;=\;M_\chi^n,\qquad T_gM_\chi\;=\;\chi(\bar g)M_\chi T_g
\end{equation}
 They define local elements $T^{(e)}_g$ and $M^{(e)}_\chi$  in $\A$ in the standard way, for any $e\in E.$
 For $\Lambda\in \Kk(E)$, a $G$-connection on $\Lambda$ is a map $c\colon \Lambda\to G.$ We  denote by $\mathfrak{C}_G(\Lambda)$ the set of all $G$-connections on $\Lambda$, and set
 \begin{equation*}
     \mathfrak{C}_G(E)\;:=\;\bigcup_{\Lambda\in \Kk(E)}\mathfrak{C}_G(\Lambda)
 \end{equation*}
One can define the unitary operators $T_c, M_{\tilde{c}}\in \A_\Lambda$  for $G$- and $\widetilde{G}$-connections $c$ and $\tilde{c}$ on $\Lambda$ by setting
 \begin{equation}
     T_c\;=\;\prod_{e\in \Lambda} T^{(e)}_{c(e)},\quad  M_{\tilde c}\;=\;\prod_{e\in \Lambda} M^{(e)}_{\tilde{c}(e)}
 \end{equation}
 An important observation is that these operators provide a presentation of $\A$:
\begin{proposition}\label{prop: generators}
One has
\begin{equation*}
  \A\;=\;C^*\big\{T_c, M_{\tilde{c}}\mid c\in \mathfrak{C}_G(E),\; \tilde{c}\in \mathfrak{C}_{\widetilde{G}}(E)\big\}. 
\end{equation*}
\end{proposition}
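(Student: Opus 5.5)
The plan is to reduce the statement to a single edge and then use the local structure of $\A$. Write $\Bb$ for the $C^*$-algebra on the right-hand side. The inclusion $\Bb\subseteq\A$ is immediate: each $T_c$ and $M_{\tilde c}$ with $c,\tilde c$ supported on a finite $\Lambda$ is a finite product of the local unitaries $T^{(e)}_{c(e)}$, $M^{(e)}_{\tilde c(e)}$, hence lies in $\A_\Lambda\subset\A_{\rm loc}$. For the reverse inclusion, $\A$ is by definition the norm closure of $\A_{\rm loc}=\bigcup_\Lambda\A_\Lambda$ and $\Bb$ is closed. So it suffices to show $\A_\Lambda\subseteq\Bb$ for every $\Lambda\in\Kk(E)$. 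Since $\A_\Lambda\simeq\bigotimes_{e\in\Lambda}\A_e$ is spanned by elementary tensors $\prod_{e\in\Lambda}a_e$ with $a_e\in\A_e$ (the factors at distinct edges commute), it is enough to show $\A_e\subseteq\Bb$ for each single edge $e$. Indeed, the connections supported on the singleton $\{e\}$ give exactly $T^{(e)}_g$ and $M^{(e)}_\chi$, and $\Bb$ is closed under products and linear combinations.

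The key step is therefore the single-site claim: the operators $\{T_g, M_\chi : g\in G,\ \chi\in\widetilde G\}$ generate $\mathbb{B}(\ell^2(G))$. I would show this directly by producing all matrix units. Pontryagin duality gives the orthogonality relation
\[
\frac1{|G|}\sum_{\chi\in\widetilde G}\chi(h)\;=\;\delta_{h,1}.
\]
From it,
\[
P_k:=\frac1{|G|}\sum_{\chi}\overline{\chi(k)}M_\chi
\]
is the rank-one projection onto $|k\rangle$, since $P_k|h\rangle=\frac1{|G|}\sum_\chi\chi(\bar k h)|h\rangle=\delta_{h,k}|h\rangle$. Then
\[
T_{g\bar k}P_k\;=\;|g\rangle\langle k|,
\]
so every matrix unit $E_{gk}$ lies in the algebra generated by the $T$'s and $M$'s. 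These matrix units span $\mathbb{B}(\ell^2(G))$, so that algebra is all of $\A_e$.

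An alternative route for the single-site claim is a commutant argument. An operator commuting with all $M_\chi$ commutes with the spectral projections $P_k$, so it is diagonal. If it also commutes with all $T_g$, that diagonal is constant, so the commutant is $\C\mathbf 1$, and the double commutant theorem in finite dimensions gives the result. I would present the explicit matrix-unit computation, since it is self-contained.

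The only point needing care is the character orthogonality, that is, that characters of a finite abelian group separate points so that $\sum_\chi\chi(h)=0$ for $h\neq 1$. This is standard: pick $\chi_0$ with $\chi_0(h)\neq1$ and use translation invariance of the sum over $\widetilde G$. Nothing else is an obstacle. The tensor-product bookkeeping, namely that products over distinct edges of single-edge elements span $\A_\Lambda$, is routine because $\A_\Lambda=\mathbb{B}(\Hh_\Lambda)$ has matrix units equal to tensor products of single-edge matrix units.
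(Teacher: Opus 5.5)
Your argument is correct. The paper gives no proof of this proposition; it treats it as the standard fact that the shift and clock unitaries $T_g$, $M_\chi$ generate $\mathbb{B}(\ell^2(G))$ on each edge. Your proof supplies exactly the omitted details.

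\begin{itemize}
\item The reduction to a single edge is sound. You use the density of $\A_{\rm loc}$ and the fact that $\A_\Lambda$ is spanned by products of single-edge elements over distinct edges.
\item The single-edge claim is settled by the explicit matrix units $T_{g\bar k}P_k=|g\rangle\langle k|$. Here $P_k=\frac{1}{|G|}\sum_\chi\overline{\chi(k)}M_\chi$ is the projection onto $|k\rangle$.
\end{itemize}

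The normalization of $P_k$ implicitly uses $|\widetilde G|=|G|$. This holds for finite abelian groups, so there is no gap.
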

 The interactions of the model associated with the vertex $v\in V$ and face $\tilde{v}\in \tilde{V}$ are unitary operators from the local algebras $\otimes_{e \ni v} \Aa_e$ and $\otimes_{e \in \tilde v} A_e$, respectively, given by
\begin{equation}\label{eq: relations}
    A^g_v\;=\;\prod_{e\ni v} T^{(e)}_{g^{\zeta(e,v)}}, \quad B^\chi_{\tilde v} \;=\;\prod_{e\in \tilde v} M^{(e)}_{\chi^{\zeta(e,\tilde v)}},\qquad g\in G,\;\chi\in\widetilde{G}
\end{equation}
where $\zeta(e,v)=1$ if $e$ points away from $v$ and $-1$ otherwise. Similarly, $\zeta(e,\tilde v)=1$ if $e$ matches the orientation of $\tilde v$ and $-1$ otherwise. The operators $A_v^g$ and $B^\chi_{\tilde v}$ are known as the vertex and face operators, respectively, and their actions can be depicted graphically as follows
\[
    A_v^{g}\Star{g_2}{g_1}{g_3}{g_4}
\;=\;
\Star{g_2\bar g}{gg_1}{g_3\bar g}{gg_4},\qquad
B_{\tilde{v}}^{\chi}\,\Plaquette{g_1}{g_2}{g_3}{g_4}
\;=\;
\chi(\bar g_1g_2g_3\bar g_4)\Plaquette{g_1}{g_2}{g_3}{g_4}
\]
The commutativity of $G$ and the orientation related conventions in \eqref{eq: relations} imply
\begin{equation}\label{eq: commutative}
    [A_v^g,A_{v'}^h]\;=\;[A_v^g,B_{\tilde{v}}^\chi]\;=\;[B_{\tilde v}^\chi,B_{\tilde{v}'}^\mu]\;=\;0,\quad (A_v^g)^n\;=\; (B_{\tilde v}^\chi)^n\;=\;{\bf 1}
\end{equation}
From the operators $A_v^g$ and $B_{\tilde v}^\chi$, one constructs the commuting projections
\begin{equation}
    P_v\;:=\; \frac{1}{|G|}\sum_{g\in G} A_v^g,\qquad P_{\tilde v}\;:=\;\frac{1}{|\widetilde{G}|}\sum_{\chi\in \widetilde{G}}B_{\tilde{v}}^\chi
\end{equation}
and the net of Kitaev Hamiltonians 
\begin{equation}\label{eq: general Hamiltonian1}
   \Kk(E)\ni \Lambda \;\mapsto\; H_\Lambda \;=\;\sum_{v\dot \in V_\Lambda }({\bf 1}-P_v)+\sum_{\tilde{v}\dot \in \tilde{V}_\Lambda}({\bf 1}-P_{\tilde v}) \in \Aa_\Lambda,
\end{equation}
where $v \dot \in V_\Lambda$ indicates that the emanating edges of the vertex $v$ are all contained in $\Lambda$ and, similarly, $\tilde v \dot \in \tilde{V}_\Lambda$ indicates that the edges of the face $\tilde v$ are all contained in $\Lambda$. 

The ground state manifolds of these models are well known \cite{CNN}, and the excited states and spectra can be studied using the so-called ribbon operators. A \emph{ribbon} $\rho$ on $\Ll$ is a finite oriented path of $\Ll$ with no self-intersections. A ribbon is said to be open if its initial and terminal endpoints are distinct, and closed if they coincide. Similarly, we denote by $\tilde{\rho}$ the ribbons in the dual graph  $\tilde{\Ll}$. There is a natural $\Z_2$-valued pairing between ribbons and their edges, determined by orientation. Namely, for any edge $e\in \rho$ we define $\beta(e,\rho)$ as $1$ if the ribbon $\rho$ traverses $e$ in the same orientation as the lattice $\Ll$, and $\beta(e,\rho)=-1$  otherwise.  Similarly, for a ribbon in the dual lattice $\tilde{\rho}$, we define $\beta(e,\tilde{\rho})=\beta(\tilde{e},\tilde{\rho})$ where $\tilde{e}$ is the unique edge in $\tilde{\Ll}$ intersecting $e.$ Thus one can define the \emph{ribbon operators} as
\begin{equation}
    F_{\tilde \rho}^g\;:=\;\prod_{e\cap \tilde{\rho}\neq \emptyset }T^{(e)}_{g^{\beta(e,\tilde{\rho})}},\qquad  F_{ \rho}^\chi\;:=\;\prod_{e\in  \rho }M^{(e)}_{\chi^{\beta(e,\rho)}},
\end{equation}
The ribbon operators enter into the following relations with the star and plaquette operators 
\begin{equation}\label{Eq:RibbonRelations}
\begin{aligned}
& A_{v}^g F_\rho^\chi \;= \;\big(\prod_{\substack{e\ni v\\ e\in \rho}}\chi(g)^{\sign(e,\rho,v)}\big) F_\rho^\chi A_{v}^g ,   \quad  & F_{\tilde{\rho}}^g B_{\tilde v}^{\chi'} \;=\; \big(\prod_{\substack{e\in \tilde{v}\\ e\cap\tilde{\rho}\neq \emptyset }}\chi(g)^{\sign(e,\tilde{\rho},\tilde{v})}\big) B_{\tilde v}^{\chi'}F_{\tilde{\rho}}^g  
\end{aligned}
\end{equation}
where $\sign(e,\rho,v)=-\zeta(e,v)\beta(e,\rho)$ and, similarly, $\sign(e,\tilde \rho,\tilde v)=-\zeta(e,\tilde v)\beta(e,\tilde \rho)$. 

\subsection{The commutative algebra of local interactions}
The commutative algebra generated by the vertex and face operators will play a central role. Here is our first characterization of it:

\begin{proposition}
The Gelfand spectrum of the commutative $C^*$-algebra $$\Cc\;:=\;C^*\big\{ A_v^g, B_{\tilde v}^\chi\mid g\in G, \; \chi\in \widetilde{G},\; v\in V,\; \tilde{v}\in \tilde{V}\big \}$$ can be identified with the  Cantor space given by 
    $$\Omega\;=\;\Omega_V\times \Omega_{\tilde V}, \quad \Omega_V\;=\;\{ f\colon V\to \widetilde{G}\},\quad \Omega_{\tilde V}\;=\;\{ f\colon \tilde{V}\to G\}. $$
As a consequence, $\Cc \simeq C(\Omega)$.
\end{proposition}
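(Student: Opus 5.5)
The plan is to compute the spectrum directly from the generators. A character $\varphi$ of $\Cc$ is determined by its values on the generators $A_v^g$ and $B_{\tilde v}^\chi$. Since $g\mapsto A_v^g$ is a unitary representation of $G$ (because $A_v^gA_v^h=A_v^{gh}$ by \eqref{eq: relations} and commutativity of $G$), the map $g\mapsto \varphi(A_v^g)$ is a homomorphism $G\to\mathbb T$, i.e.\ an element $f_V(v)\in\widetilde G$. Likewise $\chi\mapsto \varphi(B_{\tilde v}^\chi)$ is a character of $\widetilde G$, which by Pontryagin duality is evaluation at a unique $f_{\tilde V}(\tilde v)\in G$. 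This gives a map $\Phi\colon\widehat{\Cc}\to\Omega$.

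Injectivity of $\Phi$ is immediate, since characters agreeing on generators agree on $\Cc$. For continuity, both the weak$^*$ topology on $\widehat{\Cc}$ and the product topology on $\Omega$ are generated by evaluations on the $A_v^g$ and $B_{\tilde v}^\chi$, so $\Phi$ is a continuous injection from a compact space into a Hausdorff space. It is then a homeomorphism onto its image. Finally, $\Omega$ is a countable product of finite discrete sets with at least two points (as $G$ is non-trivial), hence a Cantor space.

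The main step is surjectivity, i.e.\ showing that there are no hidden relations among the generators. Fix $\omega=(f_V,f_{\tilde V})\in\Omega$. For finite sets $F\subset V$ and $\tilde F\subset\tilde V$, let $Q_{F,\tilde F}\in\Cc$ be the product of the spectral projections
\[
\frac1{|G|}\sum_g \overline{f_V(v)(g)}A_v^g \quad (v\in F),\qquad
\frac1{|G|}\sum_\chi \overline{\chi(f_{\tilde V}(\tilde v))}B_{\tilde v}^\chi \quad (\tilde v\in\tilde F).
\]
If every $Q_{F,\tilde F}\neq0$, then the closed sets $\{\varphi\in\widehat{\Cc}:\varphi(Q_{F,\tilde F})=1\}$ are nonempty and have the finite intersection property, because the projections form a decreasing net of commuting projections. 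Compactness of $\widehat{\Cc}$ then yields a character $\varphi$ with $\Phi(\varphi)=\omega$.

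To prove $Q_{F,\tilde F}\neq 0$, I work in $\Hh_\Lambda$ for a large box $\Lambda$, which is enough since the $A_v^g$, $B_{\tilde v}^\chi$ act faithfully there.
\begin{enumerate}
\item First produce a vector in the joint trivial eigenspace. The vector $|e\cdots e\rangle$ is fixed by every $B_{\tilde v}^\chi$, and these commute with the $A_v^g$. Hence $\prod_{v\in F}P_v|e\cdots e\rangle$ is still flat. It is nonzero, being a positive average of distinct basis vectors $A|e\cdots e\rangle$ with positive coefficients: the $A$-translates are basis vectors, so no cancellation occurs.
\item Next create the prescribed charges one at a time. For each $v\in F$, choose a ribbon $\rho_v$ from $v$ to a vertex far outside $F$, still inside $\Lambda$. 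For each $\tilde v\in\tilde F$, choose a dual ribbon $\tilde\rho_{\tilde v}$ leaving $\tilde F$.
\item Then apply the appropriate $F^\chi_{\rho_v}$ and $F^g_{\tilde\rho_{\tilde v}}$. By \eqref{Eq:RibbonRelations}, the phases at interior vertices and faces of a ribbon cancel, since there the two signs are opposite. So each ribbon operator shifts the eigenvalue only at its endpoints, changing $v$ (resp.\ $\tilde v$) by the chosen charge and leaving the other elements of $F,\tilde F$ untouched.
\end{enumerate}
The unitarity of the ribbon operators then yields a nonzero vector in the range of $Q_{F,\tilde F}$.

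The delicate point is the bookkeeping of the signs $\sign(e,\rho,v)$, which must confirm this endpoint-only behaviour. I would first check it on the one-edge ribbon and then argue by concatenation.
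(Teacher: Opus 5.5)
Your proposal is correct, and it takes a more complete route than the paper. The paper uses the same map $\omega\mapsto(f_\omega,\tilde f_\omega)$ and the same injectivity remark. It then settles surjectivity in one sentence (``any element of $\Omega$ extends by linearity to a character of $\Cc$''). It never explains why such a linear extension is well defined, and it does not discuss the topology.

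As you recognized, well-definedness amounts to the generators satisfying no relations beyond the group laws and commutativity, i.e.\ every $Q_{F,\tilde F}\neq 0$. This genuinely depends on the plane: on a torus, for example, $\prod_v A_v^g={\bf 1}$, so the spectrum is a proper subset of the full configuration space. Your argument supplies the missing step in four parts:
\begin{enumerate}
\item The gauge-averaged flat vector is a nonzero joint eigenvector with trivial charges on $F\cup\tilde F$.
\item Finite ribbons with far-away endpoints then move the charges on $F$ and $\tilde F$ to the prescribed values. The interior cancellation you flag does hold: at an interior vertex, the entering edge contributes $\sign=+1$ and the exiting edge $-1$, whatever the lattice orientation. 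On the dual lattice the two contributions are again opposite, because the dual orientation is the primal one rotated by a fixed angle.
\item Since $F^\chi_\rho$ commutes with all $B$'s and $F^g_{\tilde\rho}$ with all $A$'s ($G$ abelian), the two kinds of charges do not interfere.
\item Compactness of the spectrum turns nonvanishing of the $Q_{F,\tilde F}$ into a character with the prescribed values.
\end{enumerate}
Your continuous-injection argument (compact into Hausdorff) also shows the identification is a homeomorphism, which the paper leaves implicit. As a side benefit, Steps 1--3 show that $P^\omega_\Lambda\neq 0$ for every $\omega$ and every $\Lambda$. That nonvanishing is implicitly needed later, when Theorem~\ref{Teo: unique FF GS} is applied to the nets \eqref{eq: projections}.

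Three minor points:
\begin{itemize}
\item Replace ``act faithfully'' with the observation that $\Aa_\Lambda=\mathbb{B}(\Hh_\Lambda)$ embeds injectively in $\Aa$. Hence an element of $\Aa_\Lambda$ is nonzero iff it acts nontrivially on $\Hh_\Lambda$.
\item In Step 1 you do not need the translated basis vectors to be distinct; nonnegativity of the coefficients already rules out cancellation.
\item Writing $|e\cdots e\rangle$ for the identity configuration clashes with $e$ for edges; the paper's $1_G$ avoids this.
\end{itemize}
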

\begin{proof}
The commutativity follows directly from the relations in \eqref{eq: commutative}.
To describe the spectrum, let $\omega$ be a character of $\Cc$, to which we associate the map 
\begin{equation}\label{eq: f1}
f_\omega \colon V \to \widetilde G, \quad [f_\omega(v)](g)\;=\;\omega(A_v^g).
\end{equation}
It is well defined because $g\mapsto A_v^g$ is a faithful unitary representation of $G$ and, as such, $g \mapsto \omega(A_v^g)$ is a character of $G$. Similarly, we have the well defined map
\begin{equation}\label{eq: f2}
\tilde f_\omega \colon \tilde V \to \widetilde{\widetilde G}\simeq G , \quad [\tilde f_\omega(\tilde v)](\chi)\;=\;\omega(B_{\tilde v}^\chi ).
\end{equation}
Since $A_v^g$ and $B_{\tilde v}^\chi$ generate $\Cc$, $\omega \mapsto (f_\omega,\tilde f_\omega)$ is an injective map from the spectrum of $\Cc$ to $\Omega$. Conversely, any element of $\Omega$ extends by linearity to a character of $\Cc$, and this supplies the inverse of $\omega \mapsto (f_\omega,\tilde f_\omega)$. Finally, since $G$ is non-trivial, $\Omega$ with the natural product topology is a Cantor space.
\end{proof}

\begin{proposition}\label{prop: classic interactions}
Under the identification $\Cc\simeq C(\Omega)$:
\begin{enumerate}[{\rm i)}]
    \item The vertex and face operators  $A_v^g$ and $B_{\tilde v}^\chi$ become the functions
$$A_v^g(\omega)\;=\;\langle \omega(v), g\rangle,\quad B_{\tilde v}^\chi(\omega)\;=\;\langle \chi, \omega({\tilde v})\rangle$$
where $\langle \cdot,\cdot\rangle \colon \widetilde{G}\times {G}\to \T$ is the natural pairing. 
\item Conjugations by $T^{(e)}_g$ leave $\Cc$ invariant and
\[
[{\rm Ad}_{T^{(e)}_g}(C)](\omega)=C(\delta_g^e \cdot \omega), \quad \forall \ C \in \Cc,
\]
where $\delta_g^e \in \Omega$ is explicitly given by 
\begin{equation}\label{eq: deltag}
\delta_g^e(\tilde v)=
\begin{cases}
g^{\zeta(e,\tilde v)} & \text{if } e \in \tilde v,\\
1_G & \text{otherwise},
\end{cases}
\qquad
\delta_g^e(v)=1_{\widetilde G}\ \ \text{for all }v\in V.
\end{equation}
\item Conjugations by $M^{(e)}_\chi$ leave $\Cc$ invariant and
\[
[{\rm Ad}_{M^{(e)}_\chi}(C)](\omega)=C(\delta_\chi^e \cdot \omega), \quad \forall \ C \in \Cc,
\]
where $\delta_\chi^e \in \Omega$ is explicitly given by 
\begin{equation}\label{eq: deltachi}
\delta_\chi^e(v)=
\begin{cases}
\chi^{\zeta(e,v)} & \text{if } e \ni v,\\
1_{\widetilde G} & \text{otherwise},
\end{cases}
\qquad
\delta_\chi^e(\tilde v)=1_{G}\ \ \text{for all } \tilde v\in \widetilde V.
\end{equation}
\end{enumerate}
\end{proposition}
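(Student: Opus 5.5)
Part i) is immediate from the preceding proposition. There a point $\omega=(f,\tilde f)\in\Omega$ was identified with the character sending $A_v^g\mapsto [f(v)](g)$ and $B_{\tilde v}^\chi\mapsto [\tilde f(\tilde v)](\chi)=\chi(\tilde f(\tilde v))$. Writing $\omega(v):=f(v)$ and $\omega(\tilde v):=\tilde f(\tilde v)$, the Gelfand transform gives $A_v^g(\omega)=\langle\omega(v),g\rangle$ and $B_{\tilde v}^\chi(\omega)=\langle\chi,\omega(\tilde v)\rangle$. The only care needed is to fix the pairing convention $\langle\chi,g\rangle=\chi(g)$ and the double-dual identification $\widetilde{\widetilde G}\simeq G$.

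For ii) and iii), the plan is to verify everything on the generators and then extend. Since ${\rm Ad}_U$ is a $*$-automorphism of $\Aa$ and $\Cc$ is generated by $A_v^g,B_{\tilde v}^\chi$, it suffices to show two things. First, each ${\rm Ad}_U$ maps every generator to a scalar multiple of itself. This gives invariance of $\Cc$ and shows that ${\rm Ad}_U$ restricts to an automorphism of $C(\Omega)$. Second, the scalars match those produced by the translation $\omega\mapsto\delta\cdot\omega$, where the product on $\Omega$ is pointwise in $\widetilde G$ and $G$. Once the formula $C\mapsto C(\delta\cdot\,\cdot)$ holds on generators, it holds on all of $\Cc$, because both sides are $*$-homomorphisms: composition with a homeomorphism is one. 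They then agree on the $*$-algebra generated and, by continuity, on its closure.

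The generator computations all follow from \eqref{eq: elementary ribbon}, using that operators on different edges commute.

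For ii), $T^{(e)}_g$ commutes with every $A_v^h$, since the $T$'s on a single edge commute. This matches $\delta^e_g(v)=1_{\widetilde G}$. For a face $\tilde v$ with $e\in\tilde v$, only the factor $M^{(e)}_{\chi^{\zeta(e,\tilde v)}}$ is affected. From $T_gM_\chi T_g^* = \chi(\bar g)M_\chi$ we get
\[
{\rm Ad}_{T^{(e)}_g}\big(B_{\tilde v}^\chi\big)=\chi^{\zeta(e,\tilde v)}(\bar g)\,B_{\tilde v}^\chi .
\]
On the other side,
\[
B_{\tilde v}^\chi(\delta^e_g\cdot\omega)=\chi\big(g^{\zeta(e,\tilde v)}\big)\,\chi(\omega(\tilde v)).
\]
These two expressions differ by an inversion. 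I expect this sign and orientation bookkeeping to be the main, if only, obstacle. The fix is to reconcile conventions: either ${\rm Ad}_U$ means $U^*(\cdot)U$, or one uses $\chi^{\zeta}(\bar g)=\chi(g^{-\zeta})$ together with the stated exponent. I would fix the convention ${\rm Ad}_U(X)=U^*XU$, under which the phase becomes $\chi(g^{\zeta(e,\tilde v)})$ and matches \eqref{eq: deltag}. If the paper instead intends $UXU^*$, the same computation goes through with $g$ replaced by $\bar g$.

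Case iii) is symmetric. $M^{(e)}_\chi$ commutes with every $B$, so $\delta^e_\chi(\tilde v)=1_G$. For $e\ni v$, commuting $M^{(e)}_\chi$ past $T^{(e)}_{h^{\zeta(e,v)}}$ yields the phase $\chi(h^{\zeta(e,v)})=\chi^{\zeta(e,v)}(h)=\langle \chi^{\zeta(e,v)},h\rangle$. With the same ${\rm Ad}$ convention this equals $A_v^h(\delta^e_\chi\cdot\omega)/A_v^h(\omega)$, which is what \eqref{eq: deltachi} requires. For vertices not touching $e$, both sides are trivially unchanged.
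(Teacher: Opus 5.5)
Your overall route is the paper's. Part i) is read off from the identification $\omega\mapsto(f_\omega,\tilde f_\omega)$, and ii), iii) are checked on the generators and then extended. Your extension step (both sides are $*$-homomorphisms $\Cc\to\Cc$ that agree on generators, hence on the closure) is more explicit than the paper's. You are also right that, with the standard convention ${\rm Ad}_U(X)=UXU^*$, the computation in ii) gives $\chi(g)^{-\zeta(e,\tilde v)}$. The paper's proof asserts $\chi(g)^{\zeta(e,\tilde v)}$ instead.

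The gap is in your repair. Switching to ${\rm Ad}_U(X)=U^*XU$ fixes ii) but breaks iii). From $T_hM_\chi=\chi(\bar h)M_\chi T_h$ one gets
\[
M_\chi T_hM_\chi^*=\chi(h)T_h, \qquad M_\chi^*T_hM_\chi=\chi(\bar h)T_h .
\]
So under your chosen convention ${\rm Ad}_{M^{(e)}_\chi}(A_v^h)=\chi(h)^{-\zeta(e,v)}A_v^h$, which is translation by $\delta^e_{\bar\chi}$, not by $\delta^e_\chi$. The phase $\chi(h^{\zeta(e,v)})$ you quote in iii) is the $U(\cdot)U^*$ computation, so your two parts silently use opposite conventions.

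This cannot be fixed by bookkeeping. The commutators $T_gM_\chi T_g^*M_\chi^*=\chi(\bar g)$ and $M_\chi T_gM_\chi^*T_g^*=\chi(g)$ are mutually inverse. Meanwhile $A_v^h$, $B_{\tilde v}^\chi$, $\delta^e_g$ and $\delta^e_\chi$ are all written with the same exponent $+\zeta$. Once the pairing in i) is fixed as $\langle\chi,g\rangle=\chi(g)$ on both factors, no choice of ${\rm Ad}$ convention makes ii) and iii) hold simultaneously as written. Replacing $\delta\cdot\omega$ by $\delta^{-1}\cdot\omega$ does not help either, since it flips both parts at once. The paper's proof has the same defect: its ii) computation is the $U^*(\cdot)U$ one, and iii) is dismissed as ``similar''.

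What you can actually prove, with ${\rm Ad}_U=U(\cdot)U^*$, is this: iii) holds as stated, and ii) holds with $\delta^e_{\bar g}=(\delta^e_g)^{-1}$ in place of $\delta^e_g$, i.e. with exponent $-\zeta(e,\tilde v)$. State this explicitly rather than claiming one convention works. The correction is harmless downstream, because later arguments only use that these translating elements lie in $\Gamma=\ker\hat\pi$ and generate it, which is insensitive to $g\leftrightarrow\bar g$.
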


\begin{proof}
    i) The statement follows directly from \eqref{eq: f1} and \eqref{eq: f2}. ii) It is enough to verify the statement on the generators of $\Cc$. In the standard presentation, the conjugation by the operator $T_g^{(e)}$ gives
\[
{\rm Ad}_{T_g^{(e)}}(B_{\tilde v}^\chi)
\;=\; \chi(g)^{\zeta(e,\tilde v)}\,B_{\tilde v}^\chi,
\]
for all $ \tilde v$ such that $e\in \tilde v$, while the conjugation acts trivially on the remaining generators of $\Cc$.
Therefore, when viewed as functions over $\Omega$,
\[
[{\rm Ad}_{T_g^{(e)}}(B_{\tilde v}^\chi)](\omega)
\;=\; \chi(g)^{\zeta(e,\tilde v)}\,\langle \chi, \omega(\tilde v)\rangle\;=\;\langle \chi,g^{\zeta(e,\tilde v)}\omega (\tilde v)\rangle
\]
if $e\in \tilde v$, and $[{\rm Ad}_{T_g^{(e)}}(X)](\omega)=X(\omega)$ for the rest of the generators of $\Cc$. The statement then follows. Similar for iii).
\end{proof}

To further investigate the properties of $\Cc$, we introduce generalizations of the Kitaev models. For this, note that the operators $A_v^g$ and $B_{\tilde v}^\chi$ generate additional commuting projections
\begin{equation}\label{eq: group projections}
    P_v^\chi\;:=\; \frac{1}{|G|}\sum_{g\in G}\chi(g) A_v^g,\qquad P_{\tilde v}^g\;:=\;\frac{1}{|\widetilde{G}|}\sum_{\chi\in \widetilde{G}}\chi(g)B_{\tilde{v}}^\chi.
\end{equation}
Setting $W := V \cup \widetilde V$, for each $\omega \in \Omega$,
we obtain a family of commuting projections $\big\{P_w^{\omega}\big\}_{w\in W}$ defined by $P_w^\omega:=P_w^{\omega(w)}$, and the net of  generalized \emph{Kitaev Hamiltonian} 
\begin{equation}\label{eq: general Hamiltonian}
   \Kk(E)\ni \;\Lambda\; \mapsto H_\Lambda^\omega\;:=\;\sum_{v\dot \in V_\Lambda }({\bf 1}-P_v^{\omega(v)})+\sum_{\tilde{v}\dot \in \tilde{V}_ \Lambda}({\bf 1}-P_{\tilde v}^{\omega({\tilde v})}) \in \Aa_\Lambda.
\end{equation}
In particular, we recover the standard net of Kitaev Hamiltonians when $\omega=\omega_0$, where $\omega_0$ denotes the character that is identically equal to the neutral element $1_G$ of  $G$ for all $v\in V$ and to the neutral element $1_{\widetilde{G}}$ of $\widetilde{G}$  at every dual vertex $\tilde{v}.$

\begin{remark} The generalized Hamiltonian \eqref{eq: general Hamiltonian} is interesting in its own right, but in this work, these generalizations are introduced mostly for technical reasons. $\hfill \blacktriangleleft $
\end{remark}

\begin{remark}\label{rem: hamiltonian}
Observe that $H_\Lambda^\omega \in \Cc$, hence it can be identified with a
classical observable in $C(\Omega)$. More precisely, identifying
$\Lambda$ with a finite subset of $W$ and using
Remark~\ref{prop: classic interactions}, we obtain
\[
H_\Lambda^\omega(\omega') \;=\;
\sum_{w\in \Lambda}
\delta_{\omega(w)}\big(\omega'(w)\big),
\qquad \omega'\in \Omega,
\]
since $P_w^\omega(\omega')=\delta_{\omega(w)}\big (\omega'(w)\big )$. In particular, the generalized Hamiltonian may be regarded as a continuous function $
H_\Lambda \colon \Omega\times \Omega \to \R_+.$ $\hfill \blacktriangleleft $
\end{remark}

The ground state projection of the Hamiltonian \eqref{eq: general Hamiltonian} can be written as 
\begin{equation}\label{eq: projections}
    P_\Lambda^\omega \;=\;\prod_{v\in V_\Lambda}P_v^{\omega(v)}\prod_{\tilde{v}\in \tilde{V}_ \Lambda}P_{\tilde{v}}^{\omega(\tilde{v})}\;=\; \prod_{w\in W} P_w^{\omega(w)},\qquad \Lambda\in \Kk(E)
\end{equation}
This defines a frustration-free net of projections $\{P_\Lambda(\omega)\}_\Lambda$ in the sense of \cite[Definition 2.2]{DET1}. A state $\SM$ on $\mathcal{A}$ is called a \emph{frustration-free (FF) ground state} for a frustration free net $\{Q_\Lambda\}$ of projections if $\SM(Q_\Lambda) = 1$ for all $\Lambda \in \Kk(E)$ \cite{DET1}. There is a direct relation between the space $\Omega$ and the FF ground states of the generalized models. Indeed, since $\Omega$ is the Gelfand spectrum of $\mathcal{C}$, each point $\omega \in \Omega$ defines a pure state on $\mathcal{C}$ via evaluation. 
Accordingly, we will henceforth identify points of $\Omega$ with states on $\mathcal{C}$. We have:

\begin{proposition}\label{prop: FF}
    Let $\omega\in \Omega$. Then any extension of the state $\omega$ over $\Cc$ to a state $\SM_\omega$ over $\A$ is a FF ground state for the net $\{P_\Lambda^\omega\}$.
\end{proposition}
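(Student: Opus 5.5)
The plan is to compute $\SM_\omega(P_\Lambda^\omega)$ directly and show that it equals $1$. The key point is that $P_\Lambda^\omega$ lies in $\Cc$. By \eqref{eq: projections} it is a finite product of the projections $P_w^{\omega(w)}$, each of which is a finite linear combination of the generators $A_v^g$ or $B_{\tilde v}^\chi$, as in \eqref{eq: group projections}. Since $\SM_\omega$ restricts to $\omega$ on $\Cc$, we get $\SM_\omega(P_\Lambda^\omega)=\omega(P_\Lambda^\omega)$, and nothing about the extension beyond its restriction to $\Cc$ is needed.

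Next, I will evaluate the character $\omega$ on each factor. Take a vertex $v$. By Proposition~\ref{prop: classic interactions}\,i), $\omega(A_v^g)=\langle\omega(v),g\rangle$. Then, from \eqref{eq: group projections},
\[
\omega\big(P_v^{\omega(v)}\big)=\frac{1}{|G|}\sum_{g\in G}\omega(v)(g)\,\langle \omega(v),g\rangle .
\]
This should reduce to $1$ through the orthogonality relations for characters of the finite abelian group $G$; this is the content of Remark~\ref{rem: hamiltonian}, where $P_w^\omega(\omega')=\delta_{\omega(w)}(\omega'(w))$, evaluated at $\omega'=\omega$. The same computation applies to the face projections $P_{\tilde v}^{\omega(\tilde v)}$, using orthogonality of characters of $\widetilde G$ and $\widetilde{\widetilde G}\simeq G$.

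Because $\omega$ is multiplicative on the commutative algebra $\Cc$, it follows that $\omega(P_\Lambda^\omega)=\prod_{w}\omega(P_w^{\omega(w)})=1$ for every $\Lambda\in\Kk(E)$. This is exactly the FF ground state condition for the net $\{P^\omega_\Lambda\}$.

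The main obstacle is the sign and pairing convention in \eqref{eq: group projections}. As written, the coefficient is $\chi(g)$, so for the sum to collapse to $1$ one needs $\chi(g)\langle\chi,g\rangle$ to combine into a trivial character. Read literally, the sum is $\frac{1}{|G|}\sum_g\chi^2(g)$, which is $0$ unless $\chi^2$ is trivial. I would therefore first fix the convention so that $P_w^{\omega(w)}$ is the spectral projection of $\{A_w^g\}_g$ onto the joint eigenvalue $\omega(w)$; this means using $\overline{\chi(g)}$ where needed, or equivalently identifying $\omega(v)$ with its inverse. With that convention, as implicitly adopted in Remark~\ref{rem: hamiltonian}, each factor evaluates to $\delta_{\omega(w)}(\omega(w))=1$ and the argument closes. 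A more robust alternative is the positivity route: since $0\le P\le 1$ and $\omega(P)\in\{0,1\}$ for projections $P\in\Cc$, it suffices to identify $\omega(P_w^{\omega(w)})$ through Remark~\ref{rem: hamiltonian}.
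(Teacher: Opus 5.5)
Your proposal is correct and is essentially the paper's proof: since $P_\Lambda^\omega\in\Cc$, any extension satisfies $\SM_\omega(P_\Lambda^\omega)=\omega(P_\Lambda^\omega)$, and this equals $1$ by the evaluation $P_w^\omega(\omega')=\delta_{\omega(w)}(\omega'(w))$ of Remark~\ref{rem: hamiltonian}, applied factor by factor using multiplicativity of $\omega$.

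Your convention check is also accurate. With the coefficient $\chi(g)$ in \eqref{eq: group projections} one literally gets $\omega(P_v^{\omega(v)})=\delta_{1_{\widetilde G}}(\omega(v)^2)$, so the paper's remark tacitly uses the conjugate coefficient $\overline{\chi(g)}$ (equivalently the relabelling $P^\chi\mapsto P^{\bar\chi}$), which is exactly the fix you adopt. Your closing ``positivity route'' is not an independent argument, however, since knowing $\omega(P)\in\{0,1\}$ does not decide which value occurs.
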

\begin{proof}
      This is a consequence of the fact that all $P_\Lambda^\omega$ reside inside $\Cc$, hence $\SM_\omega (P_\Lambda^\omega)=\omega(P_\Lambda^\omega)$ for any extension $\SM_\omega$ of a state $\omega$ over $\Cc$. As such, $\SM_\omega(P_\Lambda^\omega) = P^\omega_\Lambda(\omega) =1$ (see remark \ref{rem: hamiltonian}), and the statement follows.
\end{proof}

\begin{definition}[\cite{DET1}]\label{Def:LTQO}
A net $\{Q_\Lambda\}$ of projections in $\mathcal{A}$ is said to satisfy the \emph{local topological quantum order} (LTQO) condition if, for every local observable $X \in \Aa_{\rm loc}$, one has
\[
\lim_{\Lambda} \| Q_\Lambda X Q_\Lambda - {\mathtt s}_\Lambda(X) Q_\Lambda \| = 0,
\]
where
\[
\SM_\Lambda(X) = \frac{\mathrm{Tr}(Q_\Lambda X Q_\Lambda)}{\mathrm{Tr}(Q_\Lambda)}
\quad \text{and} \quad X \in \Aa_\Lambda.
\]
\end{definition}

\begin{theorem}[\cite{DET1}]\label{Teo: unique FF GS}
If a frustration-free net of proper projections $\{Q_\Lambda\}$ satisfies the LTQO property, then the net converges to a minimal projection in the double dual of $\Aa$. Consequently, there exists a unique frustration-free ground state $\SM$, which is moreover pure, and it is explicitly given by the weak$^\ast$-limit \(\SM= \lim_{\Lambda} \SM_\Lambda.
\)
\end{theorem}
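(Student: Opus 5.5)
The plan is to extract all the conclusions from one estimate: the LTQO condition of Definition~\ref{Def:LTQO}, combined with monotonicity of the net. I read ``frustration-free net'' in the sense of \cite{DET1}: for $\Lambda\subseteq\Lambda'$ one has $Q_{\Lambda'}\le Q_\Lambda$ (under the inclusion $\A_\Lambda\hookrightarrow\A_{\Lambda'}$), and each $Q_\Lambda$ is nonzero, which is what ``proper'' provides. Fix $X\in\A_{\Lambda_0}$ and put $\varepsilon_\Lambda(X):=\|Q_\Lambda XQ_\Lambda-\SM_\Lambda(X)Q_\Lambda\|$, which tends to $0$ by LTQO.

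\textbf{Step 1 (existence and the weak$^\ast$ limit).} For $\Lambda_0\subseteq\Lambda\subseteq\Lambda'$, monotonicity gives $Q_{\Lambda'}=Q_{\Lambda'}Q_\Lambda$, and therefore
\[
Q_{\Lambda'}XQ_{\Lambda'}=Q_{\Lambda'}(Q_\Lambda XQ_\Lambda)Q_{\Lambda'} .
\]
Applying the normalized trace on $\A_{\Lambda'}$ to both sides yields
\[
|\SM_{\Lambda'}(X)-\SM_\Lambda(X)|\le\varepsilon_\Lambda(X).
\]
Here I will check that $\SM_\Lambda$ does not depend on the ambient box in which the trace is taken; it does not, because it is a ratio of traces. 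Hence $(\SM_\Lambda(X))_\Lambda$ is a Cauchy net for every local $X$. Since every $\SM_\Lambda$ is a state and $\A_{\rm loc}$ is dense, the limit $\SM(X)=\lim_\Lambda\SM_\Lambda(X)$ exists for all $X\in\A$ and defines a state, and this is exactly the weak$^\ast$ limit. Moreover, $\SM_{\Lambda'}(Q_\Lambda)=1$ whenever $\Lambda'\supseteq\Lambda$, so $\SM(Q_\Lambda)=1$ and $\SM$ is a FF ground state.

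\textbf{Step 2 (uniqueness and purity).} Let $\mathtt t$ be any FF ground state. From $\mathtt t(Q_\Lambda)=1$ and the Cauchy--Schwarz inequality, $\mathtt t(X)=\mathtt t(Q_\Lambda XQ_\Lambda)$. Consequently $|\mathtt t(X)-\SM_\Lambda(X)|\le\varepsilon_\Lambda(X)\to0$, so $\mathtt t=\SM$. Purity then follows. If $\SM=\lambda\mathtt t_1+(1-\lambda)\mathtt t_2$ with $0<\lambda<1$, then $\mathtt t_i(Q_\Lambda)\le1$ together with $\SM(Q_\Lambda)=1$ forces $\mathtt t_i(Q_\Lambda)=1$. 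So each $\mathtt t_i$ is FF, hence $\mathtt t_i=\SM$.

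\textbf{Step 3 (the projection in $\A^{\ast\ast}$).} The net $(Q_\Lambda)$ is decreasing, so it converges strongly in the universal enveloping von Neumann algebra $\A^{\ast\ast}$ to $Q=\inf_\Lambda Q_\Lambda$. Passing to the limit in the LTQO estimate gives $QXQ=\SM(X)Q$ for all $X\in\A_{\rm loc}$, by norm continuity. This extends to all $X\in\A$, and then to all of $\A^{\ast\ast}$ by weak$^\ast$ density and normality of the normal extension of $\SM$.

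I expect this step to be the main obstacle. Multiplication is only separately continuous, but since $Q$ is fixed, $X\mapsto QXQ$ is normal, which should suffice. The result is $Q\A^{\ast\ast}Q=\C Q$. To conclude that $Q$ is minimal I still need $Q\neq0$. I will get this from the normal extension of $\SM$: one has $\SM(Q)=\lim_\Lambda\SM(Q_\Lambda)=1$ by normality along the decreasing net. Hence $Q$ is a minimal projection, and it is the support of $\SM$.
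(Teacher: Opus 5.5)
Your proof is correct. The paper contains no proof of this statement: it is imported from \cite{DET1} and used as a black box, so your argument is a self-contained replacement rather than a variant of something in the text.

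The statement is organized around the limit projection $Q=\inf_\Lambda Q_\Lambda\in\A^{\ast\ast}$, with the FF ground state presented as a consequence. In that picture, a state $\mathtt t$ is FF if and only if its normal extension satisfies $\mathtt t(Q)=1$. Minimality of $Q$ (i.e.\ $Q\A^{\ast\ast}Q=\C Q$ with $Q\neq 0$) then amounts to uniqueness, and purity follows because minimal projections of the bidual correspond to pure states.

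You reverse this order:
\begin{enumerate}[(i)]
\item the monotonicity identity $Q_{\Lambda'}XQ_{\Lambda'}=Q_{\Lambda'}(Q_\Lambda XQ_\Lambda)Q_{\Lambda'}$ gives the Cauchy bound $|\SM_{\Lambda'}(X)-\SM_\Lambda(X)|\le\varepsilon_\Lambda(X)$;
\item the Cauchy--Schwarz absorption $\mathtt t(X)=\mathtt t(Q_\Lambda XQ_\Lambda)$ gives uniqueness with the same rate;
\item the face argument gives purity;
\item the bidual statement comes last.
\end{enumerate}
Your route is more elementary and gives explicit convergence rates for $\SM_\Lambda\to\SM$ on local observables. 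The bidual route is more structural, and it is why ``convergence to a minimal projection'' is the natural headline.

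The obstacle you anticipate in Step 3 does not actually arise. Since $Q\le Q_\Lambda$, you have $QXQ-\SM_\Lambda(X)Q=Q\bigl(Q_\Lambda XQ_\Lambda-\SM_\Lambda(X)Q_\Lambda\bigr)Q$, hence $\|QXQ-\SM(X)Q\|\le\varepsilon_\Lambda(X)+|\SM_\Lambda(X)-\SM(X)|\to 0$. No limit of products is ever taken, and the passage from $\A$ to $\A^{\ast\ast}$ is exactly the normality argument you give.
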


For the standard abelian Kitaev models, LTQO was proved in \cite[Theorem III.4]{Penneys}. We want to show that it also applies to the net of projections \eqref{eq: projections}. For this, we need the following concept:

 \begin{definition}
 Two nets of projections $\{Q_\Lambda\}$ and $\{Q'_\Lambda\}$ are \emph{locally equivalent} if there exists a net of locally representable automorphisms $\{\alpha_\Lambda \}$ of $\A$ such that $\alpha_\Lambda(Q_\Lambda)=Q_\Lambda'$ and $\alpha_\Lambda(\A_\Lambda)=\A_\Lambda$ for all $\Lambda\in \Kk(E).$ 
 \end{definition}
 
\begin{proposition}\label{prop: equivalence LTQO}
       The LTQO property is invariant under the local equivalence of nets of projections.
\end{proposition}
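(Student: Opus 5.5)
Suppose $\{Q_\Lambda\}$ satisfies LTQO and $\{Q'_\Lambda\}$ is locally equivalent to it via a net $\{\alpha_\Lambda\}$ of locally representable automorphisms. Concretely, each $\alpha_\Lambda$ is implemented by a unitary $U_\Lambda\in\Aa_\Lambda$, with $\alpha_\Lambda = {\rm Ad}_{U_\Lambda}$ and $\alpha_\Lambda(\Aa_\Lambda)=\Aa_\Lambda$. The plan is to transport the LTQO estimate through $\alpha_\Lambda$, using that automorphisms are isometric. The first step is to check that the local states match:
\[
\SM'_\Lambda(X)=\frac{{\rm Tr}(Q'_\Lambda X Q'_\Lambda)}{{\rm Tr}(Q'_\Lambda)}=\frac{{\rm Tr}(Q_\Lambda\, \alpha_\Lambda^{-1}(X)\, Q_\Lambda)}{{\rm Tr}(Q_\Lambda)}=\SM_\Lambda(\alpha_\Lambda^{-1}(X)).
\]
This uses unitary invariance of the trace on $\Aa_\Lambda$, so the normalization ${\rm Tr}(Q'_\Lambda)={\rm Tr}(Q_\Lambda)$ is automatic. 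It then follows that
\[
\|Q'_\Lambda X Q'_\Lambda-\SM'_\Lambda(X)Q'_\Lambda\|=\|Q_\Lambda\,\alpha_\Lambda^{-1}(X)\,Q_\Lambda-\SM_\Lambda(\alpha_\Lambda^{-1}(X))Q_\Lambda\|.
\]

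The difficulty is that $\alpha_\Lambda^{-1}(X)$ depends on $\Lambda$, whereas LTQO only controls a \emph{fixed} local observable. The natural way around this is an extra locality assumption, which the intended application satisfies. There the automorphisms are products of the ${\rm Ad}_{T^{(e)}_g}$ and ${\rm Ad}_{M^{(e)}_\chi}$, as in Proposition~\ref{prop: classic interactions}. These should come from a single global automorphism $\alpha$, such as an infinite product of on-site unitaries, with $\alpha_\Lambda$ agreeing with $\alpha$ on $\Aa_{\Lambda_0}$ once $\Lambda\supset\Lambda_0$ is large enough. I would formalize this as follows: for each $X\in\Aa_{\rm loc}$ the net $\alpha_\Lambda^{-1}(X)$ is eventually constant, equal to some $Y\in\Aa_{\rm loc}$. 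I would state this explicitly as the meaning of ``locally representable'', or weaken it to norm convergence $\alpha_\Lambda^{-1}(X)\to Y$.

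Given this, the argument concludes in two steps. For eventually constant nets, the right-hand side equals $\|Q_\Lambda YQ_\Lambda-\SM_\Lambda(Y)Q_\Lambda\|$ for large $\Lambda$, which tends to $0$ by LTQO of $\{Q_\Lambda\}$. In the norm-convergent version, a $3\varepsilon$ argument handles the error: both $Z\mapsto Q_\Lambda ZQ_\Lambda$ and $Z\mapsto \SM_\Lambda(Z)Q_\Lambda$ are contractive, so replacing $\alpha_\Lambda^{-1}(X)$ by $Y$ costs at most $2\|\alpha_\Lambda^{-1}(X)-Y\|$.

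Symmetry of the relation, using $\alpha_\Lambda^{-1}$, gives the converse. The main obstacle is this locality step: pinning down which property of $\{\alpha_\Lambda\}$ makes $\alpha_\Lambda^{-1}(X)$ stabilize or converge for fixed local $X$. For nets built from on-site unitaries it is immediate, because $\alpha_\Lambda^{-1}(X)$ only involves the unitaries on the support of $X$, and those are eventually fixed.
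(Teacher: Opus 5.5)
Your transport computation is the paper's own. Isometry of $\alpha_\Lambda$, together with trace invariance of the (necessarily inner) restriction $\alpha_\Lambda|_{\Aa_\Lambda}$, gives $\SM'_\Lambda(X)=\SM_\Lambda(\alpha_\Lambda^{-1}(X))$ and equality of the two LTQO defects. Only this restriction needs to be inner: in the application $\alpha_\Lambda$ also acts outside $\Lambda$, so you should not assume $\alpha_\Lambda={\rm Ad}_{U_\Lambda}$ with $U_\Lambda\in\Aa_\Lambda$ globally.

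The paper then sets $Y:=\alpha_\Lambda(X)$ and invokes LTQO of $\{Q_\Lambda\}$ directly, silently treating $Y$ as independent of $\Lambda$. You are right that this step is unjustified, and a hypothesis of the kind you add is genuinely needed. Fix an edge $e_0$, let $b_\Lambda\in\Lambda$ be an edge of maximal distance from $e_0$, and let $Q_\Lambda$ be ${\bf 1}$ at $b_\Lambda$ and $|1_G\rangle\langle 1_G|$ on every other edge of $\Lambda$. For $X\in\Aa_{\Lambda_0}$ one has $Q_\Lambda XQ_\Lambda=\SM_\Lambda(X)Q_\Lambda$ exactly once $\Lambda\supseteq\Lambda_0$ and $b_\Lambda\notin\Lambda_0$, which holds eventually, so LTQO holds. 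Now let $\alpha_\Lambda$ be the inner automorphism implemented by the flip of the factors at $e_0$ and $b_\Lambda$; it preserves $\Aa_\Lambda$. Then $Q'_\Lambda=\alpha_\Lambda(Q_\Lambda)$ is ${\bf 1}$ at $e_0$, and for $X=M^{(e_0)}_\chi$ with $\chi\neq 1_{\widetilde G}$ one gets $\SM'_\Lambda(X)=0$ while $\|Q'_\Lambda XQ'_\Lambda\|=1$. Here $\alpha_\Lambda^{-1}(X)=M^{(b_\Lambda)}_\chi$ runs off to infinity, which is exactly the phenomenon you flagged. So the proposition needs an extra locality condition, and under yours your argument, including the $2\|\alpha_\Lambda^{-1}(X)-Y\|$ estimate, is correct.

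One side claim of yours is inaccurate, though: that in the application the relevant on-site labels are eventually fixed. In Theorem~\ref{teo: unique gs}, $\alpha_\Lambda$ composes the right-pointing semi-infinite ribbon automorphisms $\alpha_w$ of Lemma~\ref{lem: auto} over $w\in W_\Lambda$. The label it places on a fixed horizontal edge $e$ is therefore the product of $\omega(v)^{\pm1}$ over all vertices of $V_\Lambda$ to the left of $e$ in its row. If, say, $\omega(v)=\chi\neq 1_{\widetilde G}$ for all $v\in V$, this label keeps cycling through powers of $\chi$ as $\Lambda$ grows. Hence for $X=T^{(e)}_g$ with $\chi(g)\neq 1$ the transported observable $\chi(g)^{\pm N_\Lambda}T^{(e)}_g$ is neither eventually constant nor norm convergent.

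Two repairs are available.
\begin{enumerate}[(i)]
\item Weaken your hypothesis to: the transported observables lie in a norm-compact subset of a fixed $\Aa_{\Lambda_0}$. Here they lie in the finite set of conjugates of $X$ by on-site products of $T$'s and $M$'s over ${\rm supp}(X)$. Your uniform $2$-Lipschitz bound on the defect maps then upgrades pointwise LTQO to uniform convergence on such a set.
\item Choose the semi-infinite ribbons to point away from a fixed reference line, so that every edge is crossed by only finitely many of them. The composition then converges to a single automorphism, and the net $\{\alpha_\Lambda\}$ can be taken constant, which is the ``single global automorphism'' you had in mind.
\end{enumerate}
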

\begin{proof}
Assume that \(\{Q_\Lambda\}\) satisfies LTQO and let \(\{Q'_\Lambda\}\) be locally equivalent to \(\{Q_\Lambda\}\).
Thus, for each \(\Lambda\) there exists a locally representable automorphism \(\alpha_\Lambda\in\mathrm{Aut}(\mathcal A)\) such that
\[
\alpha_\Lambda(Q'_\Lambda)\;=\;Q_\Lambda
\qquad\text{and}\qquad
\alpha_\Lambda(\mathcal A_\Lambda)=\mathcal A_\Lambda.
\]
Fix \(X\in\mathcal A_{\mathrm{loc}}\). For \(\Lambda\) large enough we have \(X\in\mathcal A_\Lambda\), and we set
\(Y:=\alpha_\Lambda(X)\in\mathcal A_\Lambda\).
Since \(\alpha_\Lambda\) is a \(^*\)-automorphism, it is isometric, hence
\begin{align*}
\big\|Q'_\Lambda X Q'_\Lambda-\mathtt s'_\Lambda(X)\,Q'_\Lambda\big\|
&\;=\;\big\|Q_\Lambda Y Q_\Lambda-\mathtt s'_\Lambda(X)\,Q_\Lambda\big\|.
\end{align*}
It remains to identify \(\mathtt s'_\Lambda(X)\) with \(\mathtt s_\Lambda(Y_\Lambda)\).
Since \(\alpha_\Lambda(\mathcal A_\Lambda)=\mathcal A_\Lambda\), the restriction
\(\alpha_\Lambda|_{\mathcal A_\Lambda}\) is a \(^*\)-automorphism of the finite-dimensional
C\(^*\)-algebra \(\mathcal A_\Lambda\). Hence \(\alpha_\Lambda|_{\mathcal A_\Lambda}\) is inner, so it preserves
the trace \(\mathrm{Tr}\) on \(\mathcal A_\Lambda\). 
Therefore,
\[
\mathtt s'_\Lambda(X)
\;=\;\frac{\mathrm{Tr}(Q'_\Lambda X Q'_\Lambda)}{\mathrm{Tr}(Q'_\Lambda)}
\;=\;\frac{\mathrm{Tr}\!\big(\alpha_\Lambda(Q'_\Lambda X Q'_\Lambda)\big)}{\mathrm{Tr}\!\big(\alpha_\Lambda(Q'_\Lambda)\big)}
\;=\;\frac{\mathrm{Tr}(Q_\Lambda Y Q_\Lambda)}{\mathrm{Tr}(Q_\Lambda)}
\;=\;\mathtt s_\Lambda(Y_\Lambda).
\]
Combining the previous identities yields
\[
\big\|Q'_\Lambda X Q'_\Lambda-\mathtt s'_\Lambda(X)\,Q'_\Lambda\big\|
=
\big\|Q_\Lambda Y Q_\Lambda-\mathtt s_\Lambda(Y)\,Q_\Lambda\big\|.
\]
Since \(\{Q_\Lambda\}\) satisfies LTQO, the right-hand side converges to \(0\) as \(\Lambda\) increases.
Thus \(\{Q'_\Lambda\}\) satisfies LTQO. 
\end{proof}
The following lemma shows that the projections in \eqref{eq: group projections} transform under a family of automorphisms of $\A$ according to the left action of $G$ and $\widetilde{G}$. These automorphisms have been previously used in \cite{Naaij1,Naaij2,DE1}.
\begin{lemma}\label{lem: auto}
There is a family of locally representable automorphisms $$\big\{\alpha_w^g, \alpha_w^\chi\mid g\in G,\;\chi\in \widetilde{G},\;w\in W\big\}$$ of $\A$ which commutes pairwise and satisfies the following properties
\begin{enumerate}[{\rm i)}] 
    \item $\alpha_w^g(P_{w}^h)=P_w^{gh}$ and $\alpha_w^g(P_{w'}^h)=P_{w'}^h$ for any pair $w \neq w'$ from $W$ and $g,h\in G$.
    \item $\alpha_w^g(P_{w'}^\chi)=P_{w'}^\chi$ for all $w'\in W$ and $\chi\in \widetilde{G}$.
\end{enumerate}
Similar properties are satisfied by $\alpha_w^\chi$, which can be deduced from above by switching $G$ and $\widetilde{G}$. 
\end{lemma}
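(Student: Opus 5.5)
We will realize each $\alpha_w^g$ and $\alpha_w^\chi$ as conjugation by an infinite product of ribbon operators. This is the construction of \cite{Naaij1,Naaij2,DE1}: an excitation is sent from $w$ to infinity along a half-infinite ribbon.

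Take first a face $w=\tilde v$. Fix a half-infinite dual ribbon $\tilde\rho_{\tilde v}$ that starts at $\tilde v$ and runs, say, straight to the right towards infinity. For $g\in G$, let $\tilde\rho_N$ denote its truncation to the first $N$ dual edges, and define
\[
\alpha_{\tilde v}^g(X)\;:=\;\lim_{N\to\infty}{\rm Ad}_{F^{g}_{\tilde\rho_N}}(X),\qquad X\in\A_{\rm loc}.
\]
For local $X$ the sequence is eventually constant, because $F^g_{\tilde\rho_N}$ is a product of commuting single-site unitaries and the factors far from $\supp X$ commute with $X$. Hence the limit defines a $*$-endomorphism on $\A_{\rm loc}$ that extends by continuity to $\A$. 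It is an automorphism with inverse $\alpha^{\bar g}_{\tilde v}$. It is locally representable, since on each $\A_\Lambda$ it coincides with ${\rm Ad}_{F^g_{\tilde\rho_N}}$ for $N$ large.

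For a vertex $w=v$, take a half-infinite direct ribbon $\rho_v$ and use $F^\chi_{\rho_N}$ to define $\alpha_v^\chi$ in the same way. The automorphisms $\alpha_v^g$ and $\alpha_{\tilde v}^\chi$ of the lemma (a group element acting at a vertex, a character at a face) should not be built from ribbons. They are the relabelings $\alpha_w^g\equiv{\rm id}$ in the "wrong" sector. The reason is that $P_v^h$ is labelled by characters and $P_{\tilde v}^{h}$ by group elements, so the meaningful actions are $G$ at faces and $\widetilde G$ at vertices. I will read the statement this way, matching $g$ with $\tilde V$ and $\chi$ with $V$.

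Pairwise commutation is immediate. All $F^g_{\tilde\rho}$ are products of $T$'s and commute among themselves, and likewise the $M$'s. A $T$-ribbon and an $M$-ribbon commute up to a scalar phase by \eqref{eq: elementary ribbon}, and a scalar drops out of ${\rm Ad}$. Hence all the approximants commute as automorphisms, and so do their limits.

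The core step is the action on projections, and it uses \eqref{Eq:RibbonRelations}. By Proposition \ref{prop: classic interactions}(ii), ${\rm Ad}_{F^g_{\tilde\rho_N}}$ acts on $\Cc\simeq C(\Omega)$ by translation by $\prod_{e}\delta^e_{g^{\beta(e,\tilde\rho)}}$. Along the ribbon the contributions at each interior dual vertex cancel, because the ribbon enters and leaves through edges whose combined sign $\sign(e,\tilde\rho,\tilde v')$ gives $\chi(g)\chi(g)^{-1}$. Only the endpoints survive: the initial face $\tilde v$ and the far endpoint, which moves off to infinity. So for $N$ large the net effect on $B^\chi_{\tilde v'}$ is multiplication by $\chi(g)$ when $\tilde v'=\tilde v$ and trivial otherwise, with the orientation of $\tilde\rho$ chosen to get the sign $+1$. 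Averaging over $\chi$ in \eqref{eq: group projections} gives $\alpha^g_{\tilde v}(P^h_{\tilde v})=P^{gh}_{\tilde v}$ after a reindexing. The vertex projections $P_{v'}^\chi$ are untouched, because $F^g_{\tilde\rho}$ is built from $T$'s and so is each $A_{v'}$; this is item ii). The dual statements follow by swapping roles.

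I expect the main obstacle to be the bookkeeping of orientation signs in the telescoping cancellation at interior dual vertices, and in fixing the orientation of $\tilde\rho$ so that the surviving phase is $\chi(g)$ rather than $\chi(\bar g)$. I would check this on one straight horizontal ribbon using the explicit conventions $\zeta$ and $\beta$, and verify it with the plaquette picture.
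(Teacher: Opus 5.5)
Your proposal is correct and follows essentially the same route as the paper. Both define $\alpha_v^\chi$ and $\alpha_{\tilde v}^g$ as norm limits of conjugations by ribbon operators along semi-infinite straight paths running to the right, and both read off the action on the projections from \eqref{Eq:RibbonRelations}, where only the starting site picks up a phase. You also make explicit two points the paper leaves implicit: pairwise commutation holds because products of $T$'s and $M$'s commute up to scalar phases, and the orientation bookkeeping may require replacing $g$ by $\bar g$ to obtain $P^{gh}$ rather than $P^{h\bar g}$.
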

\begin{proof}
     Let $\zeta$ be a semi-infinite straight horizontal path of edges in $\Ll$, starting at vertex $v$ and progressing to the right, and let $\zeta_n$ denote the finite sub-path consisting of its first $n$ edges. For each element $X \in \Aa$ define
$$\alpha_v^\chi(X)\;=\;\lim_n F^\chi_{\zeta_n} X F^{\bar{\chi}}_{\zeta_n}.$$ Since the above limit converges in the operator norm, $\alpha_v^\chi$ defines a locally representable automorphism of $\Aa.$ Similarly, let $\tilde \zeta$ be a path on the dual graph $\tilde \Ll$, starting at the vertex $\tilde v$ and progressing to the right, and let $\tilde \zeta_n$ be the finite sub-path made up of its first $n$ edges. Then 
$$\alpha_{\tilde v}^g(X)\;=\;\lim_n F_{\tilde \zeta_n}^g X F^{\bar{g}}_{\tilde \zeta_n},$$
defines again a locally representable symmetry on $\Aa$. Then the stated actions of $\alpha_w$'s on $P_w$'s follow from \eqref{Eq:RibbonRelations}. Indeed, we have
\[
   F_\rho^\chi P_v^{\chi'}F_\rho^{\bar{\chi}}\;=\; P^{\chi'{\chi}^{\sign(e,\rho,v)}}_v,\qquad F_{\tilde{\rho}}^gP_{\tilde{v}}^hF_{\tilde \rho}^{\bar g}\;=\;P_{\tilde v}^{hg^{\sign(e,\tilde{\rho},\tilde{v})}} 
\]
when the ribbons $\rho$ and $\tilde \rho$ intersect $v$ and $\tilde{v}$ at exactly the edge $e$, as is the case here.
\end{proof}

\begin{theorem}\label{teo: unique gs}
    The frustration-free net of projections \eqref{eq: projections} satisfies the LTQO property. More precisely, for  any $X\in \Aa_\Lambda$ there exists $\Delta\supset \Lambda$ such that 
    $$P_\Delta^\omega X P_\Delta^\omega=\SM_\Delta(X)P_\Delta^\omega,\qquad \forall\,\omega\in \Omega$$
\end{theorem}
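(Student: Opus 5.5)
The plan is to reduce the general $\omega$ to the standard case $\omega_0$ by local equivalence, and to establish the standard case with an exact finite-volume identity.

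\emph{Reduction to $\omega_0$.} Fix $\Lambda\in\Kk(E)$ and $\omega\in\Omega$. For every $w\in W$ with $w\dot\in\Lambda$, apply the automorphism $\alpha_w^{\omega(w)}$ of Lemma \ref{lem: auto}. By part i) and its $\widetilde G$-analogue, the product $\alpha_{\Lambda}^\omega:=\prod_{w}\alpha_w^{\omega(w)}$ maps $P_w^{1}$ to $P_w^{\omega(w)}$ and fixes the other projections. Hence $\alpha^\omega_\Lambda(P^{\omega_0}_\Lambda)=P^\omega_\Lambda$. The automorphisms in Lemma \ref{lem: auto} are conjugations by semi-infinite ribbons, so they do not preserve $\Aa_\Lambda$. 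I will therefore replace each semi-infinite ribbon by a finite ribbon that exits $\Lambda$ and closes off in a region where it creates no excitations relevant to $\Lambda$. A simpler alternative is to conjugate by $F_{\zeta_n}$, with $n$ chosen so that $\zeta_n$ leaves $\Lambda$. The resulting unitary then lives on a slightly larger box $\Lambda'$, and I will work with the net indexed by boxes large enough to contain these ribbons. This gives a local equivalence between $\{P^{\omega_0}_\Lambda\}$ and $\{P^\omega_\Lambda\}$ in the sense needed by Proposition \ref{prop: equivalence LTQO}.

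\emph{The exact identity.} The proof of Proposition \ref{prop: equivalence LTQO} shows more than stated: if $U$ is a unitary in $\Aa_\Delta$ with $UP^{\omega_0}_\Delta U^*=P^\omega_\Delta$, then
\[
P^\omega_\Delta XP^\omega_\Delta-\SM^\omega_\Delta(X)P^\omega_\Delta=U\big(P^{\omega_0}_\Delta YP^{\omega_0}_\Delta-\SM_\Delta(Y)P^{\omega_0}_\Delta\big)U^*,\qquad Y=U^*XU.
\]
It therefore suffices to prove the exact identity $P_\Delta XP_\Delta=\SM_\Delta(X)P_\Delta$ for the standard model, for all $X\in\Aa_{\Lambda'}$, with $\Delta$ depending only on $\Lambda'$. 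Here $\Lambda'$ is an enlargement of $\Lambda$ by the supports of the ribbons used above.

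\emph{Standard case.} By Proposition \ref{prop: generators}, $\Aa_{\Lambda'}$ is spanned by the monomials $T_cM_{\tilde c}$ with $c,\tilde c$ supported in $\Lambda'$. For such a monomial there are two cases.
\begin{itemize}
\item If it fails to commute with some $A_v^g$ or $B^\chi_{\tilde v}$ with $v,\tilde v\dot\in\Delta$, it shifts that eigenvalue. Then $P_vT_cM_{\tilde c}P_v$, or its face counterpart, vanishes, and so $P_\Delta XP_\Delta=0=\SM_\Delta(X)P_\Delta$.
\item If it commutes with all of them, it is an element of the commutant. On a simply connected region (a box $\Delta$ containing $\Lambda'$ together with a collar) such an element acts on the range of $P_\Delta$ as a product of vertex and face operators, because $\Z^2$ is simply connected: closed ribbon loops are products of stars and plaquettes. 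Hence $P_\Delta T_cM_{\tilde c}P_\Delta=\lambda P_\Delta$ for some scalar $\lambda$, and taking the normalized trace identifies $\lambda=\SM_\Delta(X)$.
\end{itemize}
This is essentially \cite[Theorem III.4]{Penneys}, which I would cite, checking that its proof yields exact equality with a one-step collar $\Delta$.

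\emph{Uniformity in $\omega$ and the main obstacle.} The box $\Delta$ depends only on $\Lambda$, because the ribbon lengths depend only on $\Lambda$ and the same collar works for every $\omega$. The main obstacle is the bookkeeping in the first step. The ribbons have to be truncated so that the conjugating unitaries stay inside the finite algebras, the endpoint excitations must stay off the vertices and faces of the enlarged region, and the projections still have to be mapped correctly. An alternative that avoids this is to apply the commutant argument directly to $P^\omega_\Delta$. The monomial classification above is insensitive to which eigenvalue $\omega(w)$ is selected, except that one must check that $P_\Delta^\omega\neq0$, which holds since $\omega$ is a character.
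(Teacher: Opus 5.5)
Your first route is the paper's proof: make $\{P^\omega_\Lambda\}$ locally equivalent to $\{P^{\omega_0}_\Lambda\}$ by composing finitely many automorphisms from Lemma~\ref{lem: auto}, then invoke Proposition~\ref{prop: equivalence LTQO} and \cite[Theorem III.4]{Penneys}. However, the obstacle you raise does not exist, and your patch for it does not close.

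A ribbon operator is a product of single-edge unitaries $T^{(e)}_g$ or $M^{(e)}_\chi$. So ${\rm Ad}_{F_{\zeta_n}}$ acts edge by edge and maps $\Aa_\Lambda$ onto itself for every $\Lambda\in\Kk(E)$, and so does the norm limit $\alpha^\chi_w$. The semi-infinite ribbons of Lemma~\ref{lem: auto} therefore already give product-type automorphisms with $\alpha_\Lambda(\Aa_\Lambda)=\Aa_\Lambda$. Having no second endpoint, they change only the label at $w$. In particular $Y=U^*XU$ stays in $\Aa_\Lambda$, so the $\Delta$ that works for $\omega_0$ works for every $\omega$.

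Your truncation scheme, by contrast, is circular as written. To get $UP^{\omega_0}_\Delta U^*=P^\omega_\Delta$, the truncated ribbons must run out to the boundary of $\Delta$; otherwise their far endpoints flip a projection in $P_\Delta$. Yet, believing that conjugation enlarges supports, you put ${\rm supp}\,U$ into $\Lambda'$ and then require $\Delta$ to contain $\Lambda'$ together with a collar.

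The alternative you mention at the end is correct and genuinely different from the paper: apply the monomial dichotomy for $T_cM_{\tilde c}$ directly to $P^\omega_\Delta$. It proves the $\omega_0$ case instead of citing \cite{Penneys}, and it makes Lemma~\ref{lem: auto} and Proposition~\ref{prop: equivalence LTQO} unnecessary. Uniformity in $\omega$ is automatic, because whether $T_cM_{\tilde c}$ commutes with a given $A^g_v$ or $B^\chi_{\tilde v}$ does not depend on $\omega$.

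Case two needs to be made precise. Commuting with every plaquette means $c$ has trivial holonomy around every face. Finitely supported closed $1$-cochains on $\Z^2$ are coboundaries of finitely supported $0$-cochains, so $c=\delta f$ with $f$ supported in the bounded components of the complement of ${\rm supp}\,c$. Hence $T_c=\prod_v A_v^{f(v)}$, and dually $M_{\tilde c}$ is a product of plaquette operators. This is where simple connectivity enters, and it is why $\Delta$ must contain the stars and plaquettes of the filled-in hull of $\Lambda$. A box with a collar does this; a collar of $\Lambda$ alone need not, for instance when $\Lambda$ is annular.

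Nonvanishing of $P^\omega_\Delta$ follows from $P^\omega_\Delta(\omega)=1$. The paper's route is shorter because it outsources this stabilizer computation; yours shows directly why the identity is exact and uniform in $\omega$.
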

\begin{proof}
 The case $\omega =\omega_0$ was proved in \cite[Theorem III.4]{Penneys}. Thus, by Proposition \ref{prop: equivalence LTQO}, to conclude the proof, it is enough to show that the net of projections $\{ P_\Lambda^\omega\}$ is pairwise locally equivalent to the net $\{P_\Lambda^{\omega_0}\}$, for all $\omega\in \Omega$. By \eqref{eq: projections} and Lemma \ref{lem: auto}, by composing a finite number of $\alpha_w$'s in Lemma \ref{lem: auto}, one gets a locally representable automorphism $\alpha_\Lambda$ such that $\alpha_\Lambda(P_\Lambda^\omega)=P_\Lambda^{\omega_0}$. This completes the proof.
\end{proof}

\begin{definition}
    A sub-$C^\ast$-algebra $\Bb \subset \Aa$ is said to have the unique extension property (UEP) if any pure state over $\Bb$ extends uniquely over $\Aa$ as a pure state.
\end{definition} 

\begin{corollary}
    $\Cc$ has the UEP.
\end{corollary}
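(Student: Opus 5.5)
Fix a pure state of $\Cc$, i.e.\ a point $\omega\in\Omega$. Existence of an extension is standard: by Hahn--Banach, $\omega$ extends to a state of $\Aa$, and the set $S_\omega$ of all such extensions is a nonempty weak$^\ast$-closed face of the state space of $\Aa$. By Krein--Milman this face has an extreme point, which is then a pure state of $\Aa$. So the real content is uniqueness: I will show that $S_\omega$ contains exactly one state. That state is then automatically pure, since a face with a single point is an extreme point.

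To get uniqueness, I use Proposition~\ref{prop: FF}: every $\SM\in S_\omega$ is a frustration-free ground state for the net $\{P^\omega_\Lambda\}$. Theorem~\ref{teo: unique gs} says this net satisfies LTQO, and the projections are proper. Theorem~\ref{Teo: unique FF GS} then says there is exactly one frustration-free ground state for this net, and it is pure. Hence $S_\omega$ has at most one element. Combined with the previous paragraph, $\omega$ has exactly one extension to $\Aa$, and it is pure.

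For a self-contained version that avoids the double-dual statement, I would argue directly from the explicit identity in Theorem~\ref{teo: unique gs}. Take $X\in\Aa_\Lambda$ and choose $\Delta\supset\Lambda$ so that $P^\omega_\Delta X P^\omega_\Delta=\SM_\Delta(X)P^\omega_\Delta$. Any $\SM\in S_\omega$ satisfies $\SM(P^\omega_\Delta)=1$. Cauchy--Schwarz then gives $\SM(X)=\SM(P^\omega_\Delta XP^\omega_\Delta)=\SM_\Delta(X)$. This number depends only on $\omega$ and $X$, not on which extension $\SM$ we took. Since local observables are dense in $\Aa$, any two elements of $S_\omega$ coincide.

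The main obstacle is checking the hypotheses of Theorem~\ref{Teo: unique FF GS}, namely that each $P^\omega_\Lambda$ is a proper projection, i.e.\ nonzero. I would obtain this from the local equivalence used in the proof of Theorem~\ref{teo: unique gs}: the automorphism $\alpha_\Lambda$ maps $P^\omega_\Lambda$ to $P^{\omega_0}_\Lambda$. The latter is nonzero because the standard Kitaev model has a nonzero local ground space, which is also implicit in the LTQO result of \cite{Penneys}. In the direct argument this point appears as well: the normalization $\mathrm{Tr}(P^\omega_\Delta)\neq 0$ is needed for $\SM_\Delta(X)$ to be defined.
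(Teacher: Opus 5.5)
Your proposal is correct and is essentially the paper's own argument: Proposition~\ref{prop: FF} places every extension of $\omega$ among the frustration-free ground states of $\{P^\omega_\Lambda\}$, and Theorems~\ref{teo: unique gs} and \ref{Teo: unique FF GS} then force that extension, which always exists, to be unique and pure. Your Cauchy--Schwarz variant is the proof of Theorem~\ref{Teo: unique FF GS} specialized to the exact LTQO identity. The properness of $P^\omega_\Lambda$ needs no detour through $\alpha_\Lambda$: Proposition~\ref{prop: FF} already gives $\SM_\omega(P^\omega_\Lambda)=1$, so $P^\omega_\Lambda\neq 0$.
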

\begin{proof}
    Proposition~\ref{prop: FF} assures us that any extension $\SM_\omega$ of a pure state $\omega$ of $\Cc$ is a FF ground state for the net \eqref{eq: projections}. Furthermore, Theorems~\ref{Teo: unique FF GS} and \ref{teo: unique gs} then assure us that such extensions, which always exist, are unique and pure.
\end{proof}

\subsection{The $C^\ast$-diagonal of an Abelian Kitaev model}

 The normalizer set of a $C^\ast$-inclusion $\Bb \subset \Aa$ is defined as
\[
\mathscr{N}(\mathcal{B})
\;:=\;
\big\{
V \in \mathcal{A}
\;\mid\;
V B V^* \in \mathcal{B}
\ \text{and}\
V^* B V \in \mathcal{B}
\ \text{for all } B \in \mathcal{B}
\big\}.
\]
It follows from the very definition that $\mathcal{B}\subset \mathscr{N}(\Bb)$. The sub-algebra $\Bb$ is said to be regular if $\mathscr{N}(\Bb)$ is dense in $\A$, \ie $C^*( \mathscr{N}(\Bb))=\A.$  Furthermore:

\begin{definition}[\cite{Kum}]\label{def: diagonal} An abelian sub-$C^\ast$-algebra $\Bb$ of $\Aa$ is said to be a $C^\ast$-diagonal if it is regular and has the UEP.
\end{definition}

Definition~\ref{def: diagonal} differs slightly from the original one in \cite{Kum}, 
since it does not explicitly assume maximal abelianness nor the existence of a 
unique conditional expectation. However, as $\mathcal{A}$ is separable, both definitions actually coincide in our setting. Indeed,
UEP immediately implies that $\Bb$ is a maximal abelian sub-$C^\ast$-algebra (masa) of $\A$ \cite[Corollary 2.7]{AB2}. Moreover, it is also true that there exists a unique conditional expectation $E\colon \A\to \Bb$, which is uniquely determined by the relation
\begin{equation}\label{eq: pure extension}
    E(X)(\omega)\;=\; \SM_\omega(X)
\end{equation}
where we used the identification $\Bb\simeq C(P(\Bb))$, with $P(\Bb)$ the topological space of pure states of $\Bb$,  $\omega\in P(\Bb)$ and $\SM_\omega$ the unique pure extension of $\omega$ to $\A$ (see \cite[theorem 3.4]{And1}).
\begin{theorem}\label{Teo: diagonal}
     The algebra $\Cc$ is a $C^*$-diagonal of $\A$. Consequently, $\Cc$ is a masa of $\A$ with a unique conditional expectation $E\colon \A\to \Cc$.
\end{theorem}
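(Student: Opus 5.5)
By Definition~\ref{def: diagonal}, showing that $\Cc$ is a $C^\ast$-diagonal amounts to two things: $\Cc$ has the UEP, and $\Cc$ is regular. The UEP is exactly the Corollary following Theorem~\ref{teo: unique gs}. So the real work is regularity, $C^*(\mathscr{N}(\Cc))=\A$. The remaining claims, that $\Cc$ is a masa and that the conditional expectation is unique, then follow from the remarks after Definition~\ref{def: diagonal}: UEP gives maximal abelianness by \cite[Corollary 2.7]{AB2}, and UEP gives a unique conditional expectation via \eqref{eq: pure extension} by \cite[Theorem 3.4]{And1}.

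For regularity I would use Proposition~\ref{prop: generators}, which presents $\A$ as the $C^\ast$-algebra generated by the unitaries $T_c$ and $M_{\tilde c}$, with $c,\tilde c$ finitely supported connections. It is enough to show that each of these generators lies in $\mathscr{N}(\Cc)$.

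First, Proposition~\ref{prop: classic interactions} ii) and iii) say that ${\rm Ad}_{T^{(e)}_g}$ and ${\rm Ad}_{M^{(e)}_\chi}$ map $\Cc$ into $\Cc$, acting as translation by $\delta^e_g$, respectively $\delta^e_\chi$. Second, the adjoints $(T^{(e)}_g)^*=T^{(e)}_{\bar g}$ and $(M^{(e)}_\chi)^*=M^{(e)}_{\bar\chi}$ belong to the same families, so both $V\,\cdot\,V^*$ and $V^*\,\cdot\,V$ preserve $\Cc$. Hence every single-edge generator is in $\mathscr{N}(\Cc)$.

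Next I would check that $\mathscr{N}(\Cc)$ is closed under products: if $V,W\in\mathscr{N}(\Cc)$, then $VW\,C\,(VW)^*=V(WCW^*)V^*\in\Cc$, and symmetrically for $(VW)^*C\,VW$. Since $T_c$ and $M_{\tilde c}$ are finite products of single-edge operators, they lie in $\mathscr{N}(\Cc)$. Therefore $\A=C^*\{T_c,M_{\tilde c}\}\subseteq C^*(\mathscr{N}(\Cc))\subseteq\A$, which is regularity.

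I do not expect a real obstacle here: the heavy input, the UEP, already rests on the LTQO result, Theorem~\ref{teo: unique gs}. The only points needing care are these. One should confirm that regularity as used in \cite{Kum} means the normalizer generates $\A$ as a $C^\ast$-algebra, rather than having dense linear span; in our case both hold, because products of the $T_c$ and $M_{\tilde c}$ are again scalar multiples of such products, by \eqref{eq: elementary ribbon}. One should also confirm that separability of $\A$ makes our definition agree with Kumjian's, as noted after Definition~\ref{def: diagonal}.
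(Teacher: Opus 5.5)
Your proposal is correct and follows the paper's proof. UEP comes from the Corollary to Theorem~\ref{teo: unique gs}; regularity comes from the unitaries $T_c$ and $M_{\tilde c}$ normalizing $\Cc$ while generating $\A$ by Proposition~\ref{prop: generators}; and the masa and unique conditional expectation claims are read off from the discussion after Definition~\ref{def: diagonal}. The only difference is that you spell out what the paper leaves implicit, namely that $\mathscr{N}(\Cc)$ is closed under adjoints and products, so the single-edge statements of Proposition~\ref{prop: classic interactions} suffice.
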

\begin{proof}
We already showed that $\Cc$ has the UEP. Moreover, $\Cc$ is regular since the operators $T_c$ and $M_{\tilde c}$ normalize $\Cc$ for any connections $c$ and $\tilde{c}$, and  they generate $\A$ according to Proposition \ref{prop: generators}. Thus, $\mathcal{C}$ is a $C^\ast$-diagonal of $\mathcal{A}$, and the remainder of the proof follows from the discussion preceding this theorem.
\end{proof}

\section{The Weyl Groupoids of Abelian Kitaev Models}
\label{Sec:WGroupoid}

We start by recalling the definition of the Weyl groupoid associated wit a $C^\ast$-inclusion, based on \cite{Kum,Ren1}. Then we give our definition of the Weyl groupoid canonically associated to an Abelian Kitaev model, together with an explicit characterization of it.

A twist $\mathfrak{T}$ of a locally compact, Hausdorff, étale, topologically principal groupoid $\G$ with unit space $\G^{0}$, is a central extension of groupoids
\[
\mathbb{T}\times \G^{0}
\;\longrightarrow\;
\mathfrak{T}
\;\xrightarrow{\;\;}
\G,
\]
where $\mathbb{T}$ acts freely on $\mathfrak{T}$ so that $\mathfrak{T}/\mathbb{T}\cong \G$.
The twist $\mathfrak{T}$ is said to be \emph{trivial} if it is isomorphic to $\G\times\mathbb{T}$.  Given a $C^\ast$-diagonal inclusion, or more generally, a Cartan inclusion $(\A,\Bb)$, the fundamental results of \cite{Kum,Ren1} state that there exists a groupoid $\G$, called the Weyl groupoid of the pair, together with a twist $\mathfrak{T}$ of $\G$ such that
\[
(\A,\Bb)\;\cong\;\bigl(C_r^*(\G,\mathfrak{T}),\, C_0(\G^{0})\bigr).
\]
Moreover, the twist $\mathfrak{T}$ is uniquely determined by $(\A,\Bb)$ up to conjugation. In other words, the twist is a complete invariant of the $C^\ast$-inclusion.

Theorem \ref{Teo: diagonal} motivates the following definition.
    \begin{definition}
    The \emph{Weyl groupoid} of an Abelian Kitaev model is the Weyl groupoid $\G_\Cc$ of the $C^\ast$-inclusion $(\Aa,\Cc).$
\end{definition}

Our immediate task is to obtain an explicit description of $\G_\Cc$. Note that $\Omega$ has a natural group structure, and we let $\Sigma$ be the abelian subgroup of $\Omega$ of all configurations with finite support, \ie all elements in $\Omega$  that differ from the neutral configurations $1_{\widetilde{G}}$ and $1_G$ at only
finitely many sites. We equip $\Sigma$ with the final topology, in which case it becomes a locally compact topological group. Consider the corresponding transformation groupoid $\G_\Sigma:=\Omega\rtimes_\eta \Sigma$ where $\eta$ stands for the action by translations, \ie
$$\Omega\times \Sigma\ni(\omega,\sigma)\;\mapsto \;\eta_\sigma(\omega)\;:=\;\sigma \cdot \omega.$$
There is a natural surjective homomorphism $\hat{\pi}\colon \Sigma \to \widetilde G\times G$ defined by 
$$\hat{\pi}(\sigma)=\big(\prod_{v\in V\cap {\rm supp}(\sigma)} \sigma(v), \prod_{\tilde{v}\in \tilde{V}\cap {\rm supp}(\sigma)}\sigma(\tilde{v})\big). $$
The above products are well-defined because any element in $\Sigma$ has finite support.  Moreover, $\hat \pi$ can be lifted to a groupoid morphism on $\G_\Sigma$ by setting $\pi(\omega,\sigma)=\hat{\pi}(\sigma)$. Denote by $\Gamma:={\rm Ker}(\hat{\pi})$ and $\G_\Gamma:=\Omega\rtimes_\eta \Gamma={\rm Ker}(\pi)$. It follows the groupoid exact sequence
  \begin{equation}\label{eq: groupoid extension}
      1 \longrightarrow \G_\Gamma
\;\xrightarrow{\;\;}
\G_{\Sigma}
\;\xrightarrow{\;\pi\;}
\widetilde G \times G
\longrightarrow 1
  \end{equation}

\begin{proposition}\label{prop: automorphism ribbon}
The action of $\Gamma$ on $\Omega$ is free and minimal. Moreover, $\eta(\Gamma)$ coincides with the subgroup of ${\rm Homeo}(\Omega)$ generated by conjugations with ribbon operators.
\end{proposition}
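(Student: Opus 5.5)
Freeness is immediate and I dispose of it first. $\Gamma\subset\Omega$ acts on $\Omega$ by group translation $\omega\mapsto\sigma\cdot\omega$. If $\sigma\cdot\omega=\omega$ for some $\omega$, then $\sigma=1$, since $\Omega$ is a group under pointwise multiplication. So every isotropy group is trivial.

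For minimality I will show that every $\Gamma$-orbit is dense. Fix $\omega,\omega'\in\Omega$ and a basic cylinder neighbourhood of $\omega'$ determined by a finite set $F\subset W$. I need $\sigma\in\Gamma$ with $(\sigma\cdot\omega)(w)=\omega'(w)$ for all $w\in F$. First set $\sigma_0(w)=\omega'(w)\omega(w)^{-1}$ on $F$ and $1$ elsewhere. This $\sigma_0$ lies in $\Sigma$, but in general $\hat\pi(\sigma_0)=(\chi_0,g_0)\neq(1,1)$. To correct this, pick a vertex $v_\ast\notin F$ and a face $\tilde v_\ast\notin F$, and put $\sigma=\sigma_0\cdot\delta$, where $\delta(v_\ast)=\bar\chi_0$, $\delta(\tilde v_\ast)=\bar g_0$, and $\delta$ is trivial elsewhere. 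Then $\sigma\in\Gamma$ and $\sigma$ agrees with $\sigma_0$ on $F$. This works because $V$ and $\tilde V$ are infinite, so compensating sites outside $F$ always exist.

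For the identification with ribbon conjugations, I will use Proposition \ref{prop: classic interactions}. Conjugation by $T^{(e)}_g$ acts on $\Omega$ as translation by $\delta^e_g$, and conjugation by $M^{(e)}_\chi$ as translation by $\delta^e_\chi$. A ribbon operator is a product of such single-edge operators, so its conjugation acts as translation by the product of the corresponding $\delta$'s.

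Inclusion of ribbon translations in $\eta(\Gamma)$. For $F^\chi_\rho$ along an open path from $v_1$ to $v_2$, the relations \eqref{Eq:RibbonRelations} show the resulting $\sigma$ is $\chi^{\pm1}$ at $v_1$ and $\chi^{\mp1}$ at $v_2$. At interior vertices the two incident contributions cancel, because $\sign(e,\rho,v)$ takes opposite values on the incoming and outgoing edges. For a closed ribbon $\sigma$ is trivial. In every case $\hat\pi(\sigma)=1$. The dual ribbons $F^g_{\tilde\rho}$ are handled the same way. Hence the group generated by ribbon conjugations sits inside $\eta(\Gamma)$.

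Reverse inclusion. I show ribbon translations generate $\Gamma$. An element $\sigma\in\Gamma$ has $\prod_v\sigma(v)=1$, so its vertex part is a finite product of "dipoles" $\chi$ at $v$ and $\bar\chi$ at $v_0$, all based at one fixed vertex $v_0$. Each such dipole is realized by $F^\chi_\rho$ for a non-self-intersecting path $\rho$ from $v_0$ to $v$, oriented so the signs match. The face part is treated identically with dual ribbons. The main obstacle is the sign bookkeeping: I must verify from $\sign(e,\rho,v)=-\zeta(e,v)\beta(e,\rho)$ that the endpoint charges come out as exactly $\chi$ and $\bar\chi$, and that interior charges cancel. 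I will do this by a direct case check of the single-edge contribution $\delta^e_\chi$ from \eqref{eq: deltachi}. That contribution is $\chi^{\zeta(e,v)}$ at each endpoint $v$ of $e$, hence $\chi$ at the tail and $\bar\chi$ at the head, and the cancellation along a path then follows by telescoping.
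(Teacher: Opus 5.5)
Your proof is correct and follows essentially the same route as the paper. Freeness comes from translation in the group $\Omega$, and minimality from density of $\Gamma$; your compensating-site construction outside $F$ makes explicit the paper's remark that every cylinder meets $\Gamma$. The identification comes from the single-edge translations $\delta^e_g,\delta^e_\chi\in\ker\hat\pi$, together with decomposing any $\gamma\in\Gamma$ into dipoles joined by ribbons to a fixed base vertex and base face. The only cosmetic difference is that you compute endpoint charges explicitly for the inclusion of ribbon conjugations into $\eta(\Gamma)$, whereas it already suffices that each $\delta^e$ lies in $\ker\hat\pi$ and $\Gamma$ is a group.
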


\begin{proof}
The action of $\Gamma$ on $\Omega$ is free, as it is given by right translations in the group $\Omega$. Furthermore, it is also minimal since the subgroup $\Gamma$ is dense in $\Omega$: every non-empty cylinder subset of $\Omega$ has a non-trivial intersection with $\Gamma$.

Denote by $\mathcal{R}$ the set of all ribbon operators in $\A$, and consider the corresponding action given by conjugation  ${\rm Ad}\colon \mathcal{R}\to {\rm Homeo}(\Omega)$. For $g\in G$, we have seen in Proposition~\ref{prop: classic interactions} that the conjugation by the elementary ribbon operator $T_g^{(e)}$ is induced from the translation by $\delta_g^e\in \Gamma$ on $\Omega$, spelled out in \eqref{eq: deltag}. This $\delta_g^e$ belongs to the kernel of $\hat \pi$ because the edge $e$ always belongs to two and only two faces, for which the coefficients $\zeta$ in \eqref{eq: deltag} are opposite. A similar argument applies for the conjugation with $M_\chi^e$. In this case, it is induced from the translation by $\delta_\chi^e$ from \eqref{eq: deltachi}. The latter also belongs to the kernel of $\hat \pi$ because the edge $e$ is always shared by two and only two vertices, for which the coefficients $\zeta$ in \eqref{eq: deltachi} are opposite. Hence, the elementary ribbon operators belong to $\eta(\Gamma)$. A general ribbon operator is the product of the elementary ribbons with an orientation for which the action is only non-trivial on the initial and final vertices of the paths. Thus, the analysis is similar and one obtains that  ${\rm Ad}(\mathcal{R})\subset \eta(\Gamma)$. 

Conversely, let $\gamma\in\Gamma$ and set
$F=\supp(\gamma)\cap V$, which is finite. Assume $|F|>1$, hence the case $|F|\neq 1$. 
Fix a base vertex $v_\ast\in F$, and
for each $v\in F\setminus\{v_\ast\}$ choose an oriented ribbon
$\rho=(e_1,\dots,e_m)$ connecting $v$ to $v_\ast$ so that $\beta(e_1,\rho)=1$. Define
\[
\gamma_v\;:=\;\prod_{e\in \rho}\delta_{\gamma(v)^{\beta(e,\rho)}}^{e}
\]
By construction, $\gamma_v(v)=\gamma(v)$, it is trivial at every vertex different from $v$ and $v_*$, and at the base vertex we have $\gamma_v(v_*)=\gamma(v)^{-1}$. Taking the product over all $v\in F\setminus\{v_\ast\}$,
we obtain
$\gamma|_V=
\prod_{v\in F\setminus\{v_\ast\}}
\gamma_v,$ 
where we used the condition $\prod_{v\in F}\gamma(v)=1_{\widetilde{G}}$ that
ensures the cancellation at the base vertex $v_*$. Fixing also a face $\tilde v_*\in \tilde{F}={\rm supp}(\gamma)\cap \tilde{V}$ and repeating a similar construction, one obtains
$$\gamma\;=\;
\prod_{v\in F\setminus\{v_\ast\}}
\gamma_v\prod_{\tilde{v}\in \tilde{F}\setminus\{\tilde{v}_\ast\}}
\gamma_{\tilde v}$$
Therefore $\Gamma$ is generated by the elements
$\delta_\chi^e$ and $\delta_g^e$, and one concludes that ${\rm Ad}(\mathcal{R})=\eta(\Gamma)$.
\end{proof}
\begin{remark}
    The space $\Omega$ consists of the unique frustration-free ground state $\omega_0$ of the standard Kitaev Hamiltonian together with all its excitations, including configurations with infinitely many violations. Proposition \ref{prop: automorphism ribbon} shows that two pure states $\omega_1$ and $\omega_2$ in $\Omega$ can be connected by a finite product of ribbon operators if they differ by an element $\Gamma$.  In particular, the orbit of $\omega_0$ under the action of $ \Gamma$ precisely describes the states in the trivial sector of the Kitaev Hamiltonian. All this information is encoded by the groupoid $\G_\Gamma$, whose elements are equivalence classes of pairs $(\omega_1,\omega_2)$ related by conjugations with ribbon operators. $\hfill \blacktriangleleft $
\end{remark}
\begin{remark} 
The sequence \eqref{eq: groupoid extension} shows that $\G_\Sigma$ decomposes into a ``gauge'' part
$\G_\Gamma$ and a global symmetry part $G\times \widetilde G$.
The groupoid $\G_\Gamma$ encodes local charge--flux configurations on $\Omega$
with trivial total anyonic content, whereas $G\times \widetilde G$
keeps track of single excitations.
Equivalently, the subgroup $\Gamma$ generates local anyonic excitations whose
net topological charge vanishes, that is, collections of anyons that mutually
annihilate. $\hfill \blacktriangleleft $
\end{remark}

\begin{theorem}
    $\G_\Gamma$ is isomorphic to the Weyl groupoid $\G_\Cc$ of the Cartan pair $(\A,\Cc)$.
\end{theorem}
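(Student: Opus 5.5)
The plan is to use the concrete description of the Weyl groupoid from \cite{Ren1}. For a Cartan pair $(\A,\Cc)$ with $\Cc\simeq C(\Omega)$, every normalizer $V\in\mathscr{N}(\Cc)$ induces a partial homeomorphism $\alpha_V\colon \operatorname{dom}(V)\to\operatorname{ran}(V)$. Here $\operatorname{dom}(V)=\{\omega: \omega(V^*V)>0\}$, and $\alpha_V$ is characterized by $\omega(V^*CV)=C(\alpha_V(\omega))\,\omega(V^*V)$. The Weyl groupoid $\G_\Cc$ is the groupoid of germs $[\alpha_V(\omega),V,\omega]$ of these partial homeomorphisms.

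The strategy is to show that the pseudogroup generated by the $\alpha_V$ has the same germ groupoid as the action $\eta$ of $\Gamma$ on $\Omega$. We then invoke freeness of the action from Proposition~\ref{prop: automorphism ribbon}. Since the action is free, the germ groupoid of $\eta(\Gamma)$ is exactly $\Omega\rtimes_\eta\Gamma=\G_\Gamma$, because distinct group elements have distinct germs at every point. The map is $(\omega,\gamma)\mapsto[\eta_\gamma(\omega),\eta_\gamma,\omega]$.

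\textbf{Key step 1 (ribbons give germs).} By Proposition~\ref{prop: automorphism ribbon}, every $\gamma\in\Gamma$ is realized as ${\rm Ad}$ of a finite product $U_\gamma$ of elementary operators $T^{(e)}_g$ and $M^{(e)}_\chi$. These unitaries normalize $\Cc$ and satisfy $\alpha_{U_\gamma}=\eta_{\gamma}$ (or its inverse, depending on conventions) by Proposition~\ref{prop: classic interactions}. So $(\omega,\gamma)\mapsto[\eta_\gamma(\omega),U_\gamma,\omega]$ is a well-defined groupoid homomorphism $\G_\Gamma\to\G_\Cc$. It is continuous and open, since the basic bisections $\Omega\times\{\gamma\}$ map onto open bisections. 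It is injective by freeness.

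\textbf{Key step 2 (surjectivity), the main obstacle.} Here we must show that every normalizer's partial homeomorphism is locally a translation by an element of $\Gamma$. I would argue as follows.
\begin{enumerate}[{\rm i)}]
\item Every element of $\A$ has a Fourier-type expansion: the conditional expectation $E$ from Theorem~\ref{Teo: diagonal}, combined with Proposition~\ref{prop: generators}, gives a local $X\in\A_\Lambda$ as a finite sum $X=\sum_{c,\tilde c} C_{c,\tilde c}\,T_cM_{\tilde c}$. Each $T_cM_{\tilde c}$ implements translation by some $\sigma_{c,\tilde c}\in\Gamma$, and the coefficients are $C_{c,\tilde c}\in\Cc$. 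Grouping terms by $\sigma$, $X=\sum_\sigma D_\sigma U_\sigma$ with $D_\sigma=E(XU_\sigma^*)$.
\item For a normalizer $V$ and a point $\omega$ with $\omega(V^*V)>0$, approximate $V$ by such sums. The relation $VC=(C\circ\alpha_V^{-1})V$ forces $E(VU_\sigma^*)$ to vanish near $\alpha_V(\omega)$ unless $\alpha_V$ agrees with $\eta_\sigma$ there. This is the standard argument that in a twisted groupoid algebra a normalizer is supported on a bisection.
\item Hence there exist $\sigma\in\Gamma$ and a neighborhood of $\omega$ on which $\alpha_V=\eta_\sigma$, so the germ of $V$ at $\omega$ equals the image of $(\omega,\sigma)$.
\end{enumerate}
A cleaner alternative for this step is to note that $\A\cong C^*_r(\G_\Gamma,\mathfrak{T})$ directly. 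One would check that the $U_\gamma$ together with $\Cc$ generate $\A$, that $E(U_\gamma)=0$ for $\gamma\neq1$ (because $\eta_\gamma$ is fixed-point free), and then apply Renault's uniqueness of the Cartan groupoid model \cite{Ren1}. The difficulty in either route is controlling the closure: the density of local elements has to be transferred to germs, which requires that the supports of $E(VU_\sigma^*)$ be open and cover $\operatorname{dom}(V)$.

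Finally, both groupoids are étale and the map is a continuous open bijective homomorphism, hence a topological isomorphism $\G_\Gamma\cong\G_\Cc$.
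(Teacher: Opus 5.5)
Your strategy is sound, and Key step 1 is fine: you identify $\G_\Cc$ directly as the germ groupoid of the normalizers and match it with $\Omega\rtimes_\eta\Gamma$. The gap is in Key step 2, and it is the point you flag yourself. For a normalizer $V$ and a point $x\in\operatorname{ran}(V)$ you must produce some $\sigma\in\Gamma$ with $E(VU_\sigma^*)(x)\neq 0$. (The non-vanishing sets of $E(VU_\sigma^*)$ lie in $\operatorname{ran}(V)$, so it is $\operatorname{ran}(V)$, not $\operatorname{dom}(V)$, that must be covered.) Norm approximation of $V$ by finite sums $\sum_{\sigma\in F_k}D^{(k)}_\sigma U_\sigma$ does not give this. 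Knowing that each coefficient is small at $x$ says nothing once $|F_k|$ grows.

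The missing idea is a Parseval identity in the GNS triple $(\pi_x,\mathcal H_x,\xi_x)$ of the unique pure extension $\SM_x$.
\begin{itemize}
\item Since $\SM_x|_{\Cc}$ is a character, $\Cc$ lies in the multiplicative domain of $\SM_x$. Hence $\pi_x(C)\xi_x=C(x)\xi_x$.
\item For $\tau\neq\sigma$ the unitary $U_\tau U_\sigma^*$ implements a fixed-point-free translation, so $\SM_x(U_\tau U_\sigma^*)=\delta_{\sigma\tau}$.
\item Every $T_cM_{\tilde c}$ equals $WU_\sigma$ for some $\sigma$ and some $W\in\Cc'\cap\A=\Cc$, by the masa property. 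This fact is also what justifies your grouping in (i).
\item It follows that $\operatorname{span}\{U_\sigma^*C\}$ is dense in $\A$, and $\{\pi_x(U_\sigma^*)\xi_x\}_{\sigma\in\Gamma}$ is an orthonormal basis of $\mathcal H_x$.
\end{itemize}
Consequently
\[
(VV^*)(x)\;=\;\SM_x(VV^*)\;=\;\|\pi_x(V^*)\xi_x\|^2\;=\;\sum_{\sigma\in\Gamma}\bigl|E(VU_\sigma^*)(x)\bigr|^2 .
\]
So the open sets $\{E(VU_\sigma^*)\neq 0\}$ cover $\operatorname{ran}(V)$. On each of them your relation in (ii) gives $\alpha_V^{-1}=\alpha_{U_\sigma}^{-1}$, which completes surjectivity.

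Your ``cleaner alternative'' actually avoids the closure problem altogether. Set $w(\gamma,\gamma'):=U_\gamma U_{\gamma'}U_{\gamma\gamma'}^*$; by the masa property this is a unitary of $\Cc$. So $(\eta,w)$ is a twisted action of the amenable group $\Gamma$ on $C(\Omega)$. The inclusion $\Cc\subset\A$ together with $\gamma\mapsto U_\gamma$ gives a surjection $\Psi\colon C(\Omega)\rtimes_{\eta,w}\Gamma\to\A$. Because $E(CU_\gamma)=0$ for $\gamma\neq 1$, $\Psi$ intertwines the canonical faithful expectation with $E$. Being injective on $C(\Omega)$, $\Psi$ is therefore injective, and Renault's Proposition 4.13 applies to the resulting isomorphism of pairs.

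The paper argues quite differently. It uses local finiteness of $\Gamma$ to see that $\G_\Gamma$ is an AF-relation with trivial twist. It then invokes Krieger's classification and computes $H_0(\G_\Gamma)\cong\Z[1/n]$ via the Bernoulli measure. Your route needs no $K$-theory or classification. It yields directly an isomorphism of Cartan pairs carrying $\Cc$ onto $C(\Omega)$, rather than an abstract isomorphism of UHF algebras, and that is exactly the input Renault's uniqueness theorem takes. The paper's route in return gives the triviality of the twist and the UHF type of $C^*_r(\G_\Gamma)$ explicitly.
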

\begin{proof}
   Observe that $\Gamma$ is a locally finite group. Consequently, \cite[Theorem~3.8]{GTS} and Proposition \ref{prop: automorphism ribbon} implies that $\G_\Gamma$ is a $AF$-relation in the sense of \cite[Definition~3.7]{GTS}. Therefore, $C^*_r(\G_\Gamma)$ is an $AF$-algebra. Since the twist is trivial on $AF$-relations \cite[Proposition 4.6]{DE1}, by \cite[Proposition 4.13]{Ren1} and Proposition \ref{prop: automorphism ribbon}, to complete the proof, it is enough to show that $C^*_r(\G_\Gamma)\simeq \A$. By Krieger’s result \cite{Krieger} (see also \cite[Theorem~4.10]{Mat}),  the latter  is equivalent to compute the ordered dimension group $(H_0(\G_\Gamma),H_0(\G_\Gamma)^+,[1_{\Omega}])$, as defined in  \cite[Definition~3.1]{Mat}, and verify that
    \[
(H_0(\G_\Gamma),H_0(\G_\Gamma)^+,[1_{\Omega}])\;\simeq\; (\Z[1/n],\Z_+[1/n],1)
\]
Consider the  Bernoulli probability measure $\mu$ on $\Omega$, \textit{i.e.}, $\mu=\big(\bigotimes_V\delta)\otimes \big(\bigotimes_{\tilde{V}}\tilde{\delta}\big)$ with $\delta$ and $\tilde{\delta}$ the Haar measures on $G$ and $\widetilde{G}$, respectively. It is $\Gamma$-invariant, and therefore the map
\[
\varphi\colon H_0(\G_\Gamma)\to \mathbb{R},
\qquad
\varphi([Q])=\int_\Omega Q \,\mathrm{d}\mu,
\]
is a group homomorphism \cite[Section~6]{Mat}. By emulating the argument in the proof of Lemma 4.4 in \cite{DE1} one can check $\varphi$ is injective, and its image coincides with $\Z[1/n]$. Moreover, $\varphi$ preserves the order dimension, concluding the proof.
\end{proof}

% Since $\Cc$ is a $C^*$-diagonal of $\A$, then there is a unique conditional expectation $E\colon \A\to \Cc$ which can be described as follows: for any $X \in \mathcal{A}$, the compression $E(X)$ is determined by the  continuous function 
% $E(X)(\omega) = \SM_\omega(X),$ where $\omega\in \Omega$ and  $\SM_\omega$ denotes the unique pure extension of the state $\omega$ on $\mathcal{C}$ to $\mathcal{A}$.

\medskip

\section{The complete set of KMS states}
This section is devoted to the explicit computation of the KMS states for the abelian Kitaev model. We begin by recalling the definition of KMS states, along with the notion of KMS measures in the setting of abstract groupoids.
\subsection{KMS states}
Denote by $\alpha^H_\omega$ the dynamics on $\A$ generated by the generalized Kitaev Hamiltonian \eqref{eq: general Hamiltonian}.  We will focus on the case $\omega=\omega_0$, as the general case can be treated in the same way, and we shall write the corresponding dynamics as $\alpha^H$ for short.  To describe the set of ${\rm KMS}$ states on the dynamical system $\big(\A,\alpha^{H}, \R\big)$ let us recall its standard definition \cite{bratteli-robinson-97}. We say that a  $\alpha^H$-invariant state 
 $\SM$ for $\A$  is a $(\alpha^H,\beta)$-${\rm KMS}$ state at inverse temperature  $\beta\in [0,\infty)$ if for all $X$ and $Y$ in $\A$ there exists a function $F$ which  is continuous and bounded on the strip $0\leq {\rm Im} z\leq \beta $ and analytic on $0<{\rm Im}z<\beta$ so that meets the  following conditions for all $t\in \R$
\begin{enumerate}[i.]
    \item $F(t)=\SM(X\alpha^H_t(Y))$;
    \item $F(t+\ii \beta)=\SM(\alpha^H_t(Y)X)$
\end{enumerate}
Formally, the latter is equivalent to saying that for any $X$ and $Y$ in a dense set of entire analytic elements of $\alpha^H$ the following equality holds
\begin{equation}\label{eq: KMS}
    \SM\big( X\alpha^H_{\ii\beta}(Y)\big)=\SM\big(YX\big)
\end{equation}
when the expression $\alpha^H_{\ii\beta}(b)$ makes sense. The idea of this section is to find all the states $\SM$ that solve the equation \eqref{eq: KMS} for the dynamics induced by the Kitaev Hamiltonian \eqref{eq: general Hamiltonian1}. We shall see that such an equation is completely determined by suitable measures on $\Omega$ by using the Weyl groupoid.

\subsection{KMS measures}
 Consider an abstract amenable locally compact groupoid $(\G,\G^{0})$   with Haar system $\lambda:=\{\lambda_u\}_{u\in \G^{0}}$. We shall denote by $\lambda^{-1}:=\{\lambda^u\}_{u\in \G^{0}}$ the pushforward Haar system induced by the inversion map on $\G$.
A Borel measure $\mu$ on $\G^{0}$ induces a measure on $\G$ by setting
$$ \lambda_\mu(f)\;:=\;\int_{\G}f(\zeta)\,{\rm d}\mu(u){\rm d}\lambda_u(\zeta),\qquad f\in C_c(\G)$$
Similarly, one can also define a measure $\lambda_\mu^{-1}$ on $\G$ associated with the Haar system $\lambda^{-1}:=\{\lambda^u\}_{u\in \G^{0}}$.

\begin{definition}
    A Borel measure $\mu$ on $\G^{0}$ is said to be quasi-invariant if  $\lambda_\mu\sim \lambda_\mu^{-1}$, \ie they are mutually absolutely continuous. If this is the case, there is a \emph{Radon-Nikodym} derivative $\Delta_\mu:={\rm d} \lambda_\mu/ {\rm d}\lambda_
    \mu^{-1}$  called the \emph{modular function}, which defines a groupoid homomorphism $\Delta_\mu\colon \G \to \R_+^{\times}$, where $\R_+^{\times}$ denotes the multiplicative group of positive real numbers.
\end{definition}

\begin{remark}
Observe that an invariant measure on $\mathcal{G}^{0}$ is nothing but a quasi-invariant such that $\Delta_\mu(\zeta)=1$ for all $\zeta\in \mathcal{G}.$ $\hfill \blacktriangleleft $
\end{remark}
It is well known that any quasi-invariant measure $\mu$ induces a state $\SM_\mu$ on the groupoid $C^*$-algebra $C^*_r(\G)$ as follows
\begin{equation}\label{eq: induced state}
    \SM_\mu(f)\;:=\;\int_{\G^{0}} E(f)(u){\rm d}\mu(u)
\end{equation}
where recall that $E\colon C^*_r(\G)\to C_0(\G^{0})$ is the standard conditional expectation, given by restriction.

\medskip 

 A (continuous) \emph{$1$-cocycle}
with values in an abelian topological group $A$ is a (continuous) map
$c\colon\G \to A$ such that
\[
c(\zeta_1\zeta_2)\;=\;c(\zeta_1)+c(\zeta_2)
\qquad\text{for all } (\zeta_1,\zeta_2)\in \G^{(2)}.
\]
We denote by $Z^{1}(\G,A)$ the abelian group of continuous $1$-cocycles
$\G\to A$. If $c\in Z^{1}(\G,A)$ and $\widetilde A$ is the Pontryagin dual of $A$, then 
\[
\tilde a \;\mapsto\; [\alpha^c_{\tilde a}(f)](\zeta)\;:=\;\langle \tilde a,c(\zeta)\rangle f(\zeta), \quad f\in C_c(\G),
\]
extends to a $C^\ast$-dynamical system $(C^\ast_r(\G),\widetilde A,\alpha^c)$ \cite[Proposition~5.1]{RenaultBook}. In particular, if $A=\R$, the map
 $$(\alpha^c_tf)(\zeta)\;:=\;e^{\ii t c(\zeta)} f(\zeta),\qquad t\in \R, \ f\in C_c(\mathcal{G})$$
defines a strongly continuous family of $*$-automorphisms $t\mapsto \alpha^c_t$ on $C^*_r(\mathcal{G})$. Furthermore, $C_c(\G)$ consists of entire analytic elements for $\alpha^c$. The (pre-)generator of $\alpha^c$ is the $*$-derivation  $\delta^c(f)(\zeta):=\ii c(\zeta)f(\zeta)$. For $c\in Z^1(\G,\T)$, we can also define an automorphism $\alpha^c$ of $\A$ by setting
 $$(\alpha^cf)(\zeta)\;:=\; c(\zeta) f(\zeta),\qquad f\in C_c(\mathcal{G}).$$
\begin{definition}\cite{RenaultBook}
    Let $c\in Z^{1}(\G,\R)$ and $\beta\in\R$.
A probability measure $\mu$ on $\G^{0}$ is said to satisfy the
\emph{$(c,\beta)$-KMS condition} if $\mu$ is quasi-invariant and its modular function
satisfies
\[
\Delta_\mu(\zeta)\;=\;e^{-\beta\, c(\zeta)}
\qquad\text{for $\lambda_\mu$-a.e. }\zeta\in\G;
\]
\end{definition}

Any KMS measure induces a KMS state on the $C^*$-algebra $C^*_r(\G)$ in the standard way.
\begin{proposition}[Prop.~5.4, \cite{RenaultBook}]
    Let $c\in Z^{1}(\G,\R)$, $\beta\in [0,\infty]$, and $\mu$ a \emph{$(c,\beta)$-KMS} measure on $\G^{0}$.  Then the induced state $\SM_\mu$ according to \eqref{eq: induced state} is a $(\alpha^c,\beta)$-KMS state for $C^*_r(\G).$
\end{proposition}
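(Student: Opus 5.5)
The plan is to verify the KMS condition \eqref{eq: KMS} for $\SM_\mu$ directly on the dense $*$-subalgebra $C_c(\G)$. Its elements are entire analytic for $\alpha^c$, with $\alpha^c_{z}(g)(\zeta)=e^{\ii z c(\zeta)}g(\zeta)$. Once the identity holds there, the standard density argument from \cite{bratteli-robinson-97} extends it to all of $C^*_r(\G)$ and gives the strip-function formulation. Invariance of $\SM_\mu$ under $\alpha^c$ comes first and is immediate: a cocycle vanishes on units, since $c(u)=c(uu)=2c(u)$. Hence $E(\alpha^c_t f)=E(f)$, and so $\SM_\mu\circ\alpha^c_t=\SM_\mu$.

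For $f,g\in C_c(\G)$, I would write out the convolution restricted to the unit space:
\[
E(f\ast g)(u)\;=\;\int_{\G} f(\zeta)\,g(\zeta^{-1})\,{\rm d}\lambda^u(\zeta),
\]
using the range-fibre Haar measure $\lambda^u$. Integrating against $\mu$ gives
\[
\SM_\mu\big(f\ast\alpha^c_{\ii\beta}(g)\big)=\int f(\zeta)\,e^{-\beta c(\zeta^{-1})}g(\zeta^{-1})\,{\rm d}\lambda^{-1}_\mu(\zeta).
\]
Similarly, $\SM_\mu(g\ast f)=\int g(\xi)f(\xi^{-1})\,{\rm d}\lambda^{-1}_\mu(\xi)$. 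Now change variables $\xi=\zeta^{-1}$ in the second expression. The inversion map pushes $\lambda^{-1}_\mu$ to $\lambda_\mu$, so this becomes $\int g(\zeta^{-1})f(\zeta)\,{\rm d}\lambda_\mu(\zeta)$.

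By quasi-invariance, ${\rm d}\lambda_\mu=\Delta_\mu\,{\rm d}\lambda^{-1}_\mu$. The KMS condition on $\mu$ gives $\Delta_\mu(\zeta)=e^{-\beta c(\zeta)}$, while the cocycle property gives $c(\zeta^{-1})=-c(\zeta)$. The key step is to match these factors so that the two integrals coincide. I need to get the sign conventions right, namely which of $\lambda_\mu,\lambda_\mu^{-1}$ carries $\Delta_\mu$ and whether the analytic continuation is $\alpha_{\ii\beta}$ or $\alpha_{-\ii\beta}$. These must align with the paper's convention ${\rm d}\lambda_\mu/{\rm d}\lambda^{-1}_\mu$ and with \eqref{eq: KMS}. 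If they appear mismatched, I would check that the definitions are consistent, swapping the roles of $f$ and $g$ or the direction of the Radon--Nikodym derivative, as in Renault's original computation. The identity holds $\lambda_\mu$-a.e., which suffices because the functions are integrated.

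The main obstacle is this bookkeeping of conventions, together with the extension from $C_c(\G)$ to the full algebra. For the extension, I would note two facts. First, $\SM_\mu$ is a state: it is positive, since $E$ is a positive conditional expectation, and $\mu$ is a probability measure. Second, $C_c(\G)$ is an $\alpha^c$-invariant dense $*$-subalgebra of entire analytic elements. The criterion in \cite[Prop.~5.3.7]{bratteli-robinson-97} then yields the KMS property. The case $\beta=\infty$ would be handled separately as a ground state, or via the limit of the above, but I would treat $\beta\in[0,\infty)$ as the main case.
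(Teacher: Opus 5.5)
Your overall strategy is the standard one and is essentially Renault's argument, which the paper only cites: verify $\SM_\mu\big(f\ast\alpha^c_{\ii\beta}(g)\big)=\SM_\mu(g\ast f)$ on $C_c(\G)$, then extend by the Bratteli--Robinson density criterion. The invariance and extension steps are fine. However, the one substantive step, which you defer as bookkeeping, fails as you set it up.

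Your formula for $E(f\ast g)(u)$, using the range-fibre measures of the Haar system, is correct. The error is calling its $\mu$-integral $\lambda^{-1}_\mu$. Carried to the end, your own formulas give
\begin{align*}
\SM_\mu\big(f\ast\alpha^c_{\ii\beta}(g)\big)&=\int f(\zeta)\,e^{\beta c(\zeta)}\,g(\zeta^{-1})\,{\rm d}\lambda^{-1}_\mu(\zeta),\\
\SM_\mu(g\ast f)&=\int f(\zeta)\,\Delta_\mu(\zeta)\,g(\zeta^{-1})\,{\rm d}\lambda^{-1}_\mu(\zeta)=\int f(\zeta)\,e^{-\beta c(\zeta)}\,g(\zeta^{-1})\,{\rm d}\lambda^{-1}_\mu(\zeta).
\end{align*}
That is, $\SM_\mu(g\ast f)=\SM_\mu\big(f\ast\alpha^c_{-\ii\beta}(g)\big)$, which is the $(-\beta)$-KMS condition. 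Neither proposed remedy helps. Swapping $f$ and $g$ only relabels an identity that must hold for all pairs. Reversing the Radon--Nikodym derivative changes the hypothesis rather than proving the statement.

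The missing observation is that the measure $m$ with $\SM_\mu(f\ast g)=\int f(\zeta)g(\zeta^{-1})\,{\rm d}m(\zeta)$ is intrinsic; in the étale case $m=\int\sum_{r(\zeta)=u}\delta_\zeta\,{\rm d}\mu(u)$. The hypothesis $\Delta_\mu=e^{-\beta c}$ yields the $\beta$-KMS condition exactly when $m$ sits in the numerator of $\Delta_\mu$, i.e.\ $m=\lambda_\mu$. So the paper's Haar system $\{\lambda_u\}$ is Renault's range-fibre system $\{\lambda^u\}$. Mixing his superscript convention with the paper's notation is what flipped your sign. With $m=\lambda_\mu$ the computation closes:
\begin{align*}
\SM_\mu(g\ast f)&=\int f(\zeta)\,g(\zeta^{-1})\,{\rm d}\lambda^{-1}_\mu(\zeta)=\int f(\zeta)\,\Delta_\mu(\zeta)^{-1}g(\zeta^{-1})\,{\rm d}\lambda_\mu(\zeta)\\
&=\int f(\zeta)\,e^{\beta c(\zeta)}\,g(\zeta^{-1})\,{\rm d}\lambda_\mu(\zeta)=\SM_\mu\big(f\ast\alpha^c_{\ii\beta}(g)\big).
\end{align*}
The two-point pair groupoid settles the convention unambiguously. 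Take $M_2(\C)$ with $H=\mathrm{diag}(h)$ and $c(x,y)=h(x)-h(y)$, so that $\alpha^c_t={\rm Ad}\,e^{\ii tH}$. The Gibbs state $\mathrm{Tr}(e^{-\beta H}\,\cdot\,)/Z$ has $m(x,y)=\mu(x)\propto e^{-\beta h(x)}$, hence ${\rm d}m/{\rm d}m^{-1}=e^{-\beta c}$.

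Two smaller points. First, positivity and boundedness of $\SM_\mu$ through $E$ use that $\G$ is étale, which is the $r$-discrete setting of Renault's result; for a non-discrete group, $f\mapsto f(e)$ is unbounded on $C^*_r$. Second, the paper's definition of a $(c,\beta)$-KMS measure imposes no condition at $\beta=\infty$, so ``via the limit'' does not settle that case.
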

It should be mentioned that a KMS state $\SM$ on $C^*_r(\G)$ also induces a KMS measure on $\G^{0}$ by restricting the state to the sub-$C^*$-algebra $C_0(\G^0)$ of $C^*_r(\G)$. This correspondence is bijective for principal groupoids \cite[Chapter 2, Proposition 5.4]{RenaultBook}, \ie any KMS state arises from a KMS measure on $\G^0$.  For étale groupoids, the principal assumption can be weakened, and the above result still holds \cite[Proposition 3.2]{KR1}. However, in the non-principal étale case, KMS measures alone are not sufficient to describe all KMS states on $C^*_r(\G)$. Namely, KMS states can instead be characterized in terms of fields of traces on the $ C^*$-algebras associated with the isotropy groups of $\G$ \cite{Nesh1}.

\subsection{The set of KMS states}
Coming back to the case $\G_
\Cc$, since it is a transformation groupoid, then a Borel measure $\mu$ on $\Omega$ is {quasi-invariant} if 
$\mu \sim \gamma_\ast \mu$
 for all $\gamma \in \Gamma$
where  $\gamma_\ast \mu(E) = \mu(\eta_\gamma^{-1} (E))$.  Here, the modular function takes the form
\[
\Delta_\mu(\omega,\gamma)
\;=\;
\frac{{\rm d}(\gamma_\ast \mu)}{{\rm d}\mu}(\omega),
\qquad (\omega,\gamma)\in \Omega \rtimes \Gamma .
\]
In particular, a quasi-invariant measure $\mu$ induces a state $\SM_\mu$ on $\A$ as follows
\begin{equation}\label{eq: induced state2}
    \SM_\mu(X)\;:=\;\int_\Omega E(X)(\omega){\rm d}\mu(\omega)
\end{equation}
where recall that $E\colon \A\to \Cc\equiv C(\Omega)$ is the unique conditional expectation, which in the groupoid presentation of $\Aa$ reduces to the restriction of $C^\ast_r(\G_\Cc)$ to $C(\G_\Cc^{0})$.   

\medskip 

With these ingredients in place, we are now in a position to show that the KMS states of the Kitaev Hamiltonian are in bijective correspondence with the KMS probability measures on $\Omega$.
\begin{theorem}\label{theo: KMS measures} The following hold:
\begin{enumerate}[{\rm i)}]
    \item There exists a cocycle $c\in Z^1(\G_\Cc,\R)$ such that $\alpha^H = \alpha^c$.
    \item All $(\alpha^H,\beta)$-KMS states are induced from $(\alpha^c,\beta)$-${\rm KMS}$ measures and every $(\alpha^c,\beta)$-${\rm KMS}$ measure induces a $(\alpha^H,\beta)$-KMS state.
\end{enumerate}

%Let $\beta\in \R$ be a finite temperature. A state  $\SM$ of $\A$ is a $(\alpha^H, \beta)$-${\rm KMS}$  state if and only if there exists $c\in Z^1(\G_\Cc,\R)$ and $(\alpha^c,\beta)$-${\rm KMS}$ measure $\mu$ on $\Omega$  such that $\SM=\SM_\mu.$
\end{theorem}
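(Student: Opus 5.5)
For part i), I will build the cocycle by hand in the transformation-groupoid picture $\G_\Cc\simeq\G_\Gamma=\Omega\rtimes_\eta\Gamma$. The key observation is that $\alpha^H$ fixes $\Cc$ pointwise. Indeed, each finite-volume Hamiltonian $H_\Lambda$ lies in $\Cc$ and $\Cc$ is commutative, so $e^{\ii tH_\Lambda}Ce^{-\ii tH_\Lambda}=C$. Passing to the limit $\Lambda\to\infty$ gives $\alpha^H_t|_\Cc={\rm id}$. This is exactly the setting of Komura's result cited in the introduction: a one-parameter group of automorphisms of a Cartan pair that fixes the diagonal pointwise comes from a continuous $\T$-valued cocycle for each $t$. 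With a strongly continuous parameter, this yields $c\in Z^1(\G_\Cc,\R)$.

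I will also verify this directly, which gives an explicit formula. By Proposition~\ref{prop: automorphism ribbon}, an arrow $(\omega,\gamma)$ is implemented by ribbon operators. Since each $\gamma\in\Gamma$ is a product of the generators $\delta^e_g,\delta^e_\chi$, it suffices to compute $\alpha^H_t$ on the normalizers $T^{(e)}_g$ and $M^{(e)}_\chi$. Only finitely many terms of $H_\Lambda$ fail to commute with $T^{(e)}_g$, namely the two faces containing $e$. Hence
\[
\alpha^H_t(T^{(e)}_g)=e^{\ii t(H_{\rm loc}-{\rm Ad}_{T^{(e)}_g}H_{\rm loc})}\,T^{(e)}_g ,
\]
where $H_{\rm loc}\in\Cc$ is the local piece. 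Using Proposition~\ref{prop: classic interactions} and Remark~\ref{rem: hamiltonian}, this reads $e^{\ii t\,c(\omega,\delta)}$ times $T^{(e)}_g$ in the groupoid picture, with
\[
c(\omega,\gamma)=\sum_{w\in W}\big[\delta_{\omega_0(w)}(\omega(w))-\delta_{\omega_0(w)}((\gamma\omega)(w))\big].
\]
The sum is finite because $\gamma$ has finite support. Up to a sign convention, this is the energy difference between $\omega$ and $\gamma\cdot\omega$. It is visibly additive along composable arrows, and it is locally constant, hence continuous. Since $\alpha^c$ and $\alpha^H$ agree on $\Cc$ and on the generating normalizers, and both are automorphism groups, they agree on $\A$.

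For part ii), I will invoke Renault's correspondence. By Proposition~\ref{prop: automorphism ribbon}, $\Gamma$ acts freely on $\Omega$, so $\G_\Gamma$ is principal. It is also étale and amenable, since it is AF and $\Gamma$ is locally finite. By \cite[Chapter 2, Proposition 5.4]{RenaultBook}, or \cite[Proposition 3.2]{KR1} in the étale setting, every $(\alpha^c,\beta)$-KMS state on $C^*_r(\G_\Gamma)\cong\A$ restricts on $C(\Omega)$ to a $(c,\beta)$-KMS measure $\mu$, and it equals $\SM_\mu$ as in \eqref{eq: induced state2}. Here one uses that the conditional expectation $E$ of Theorem~\ref{Teo: diagonal} coincides with restriction to the unit space, which follows from its uniqueness. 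The converse direction is the quoted Prop.~5.4: each $(c,\beta)$-KMS measure induces an $(\alpha^c,\beta)=(\alpha^H,\beta)$-KMS state.

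The main obstacle is making part i) rigorous: the identification of $\alpha^H$ with $\alpha^c$ must be compatible with the groupoid isomorphism $\A\cong C^*_r(\G_\Gamma)$. That isomorphism is only established abstractly, via Krieger's theorem, so one has to check that under it the ribbon normalizers $T^{(e)}_g$ and $M^{(e)}_\chi$ are supported on the bisections $\{(\omega,\delta)\}$. I would handle this first by appealing to Renault's construction of the Weyl groupoid from normalizers. In that construction, normalizers whose adjoint action is the translation $\delta$ are, by definition, supported on the corresponding bisection, and the twist is trivial. This pins down the identification, and the rest of the argument is bookkeeping.
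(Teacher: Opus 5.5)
Your proposal is correct and follows the paper's proof. Part i) goes through Komura's result for an action that is trivial on $\Cc$; the paper additionally spells out how the $\T$-valued cocycles $c_t$ assemble into a real cocycle through the continuous homomorphisms $t\mapsto c_t(\omega,\gamma)$. Part ii) goes through Renault's Proposition II.5.4 for the principal \'etale groupoid $\G_\Cc$, as in the paper, and your explicit energy-difference computation on the normalizers $T^{(e)}_g$, $M^{(e)}_\chi$ is what the paper does afterwards in the proposition identifying $\alpha^H$ with $\alpha^{c_H}$.
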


\begin{proof}
i) Observe that the dynamics $t\mapsto \alpha^H_t$ restricts to the identity on
$\Cc=C(\G_\Cc^0)$ for all $t\in\R$.
Therefore, since $\G_\Cc$ is a  principal etal\'e groupoid, by \cite[Corollary~3.2.6]{KomuraDocMath2025}, each automorphism
$\alpha^H_t$ is implemented by a unique $\T$-valued $1$-cocycle
$c_t\in Z^1(\G_\Cc ,\T)$. Moreover, since $\alpha^H$ is a strongly continuous action of\/ $\R$,
the map $(t,\omega,\gamma)\mapsto c_t(\omega,\gamma)$
is jointly continuous, and as such for each arrow $(\omega,\gamma)\in\G_\Cc$ the function
$t\mapsto c_t(\omega,\gamma)$ defines a continuous group homomorphism
from $\R$ to $\T$.
It follows that there exists a unique real number $c(\omega,\gamma)\in\R$
such that
\[
c_t(\omega,\gamma)\;=\;e^{\ii t\,c(\omega,\gamma)},
\qquad\text{for all }t\in\R.
\]
By joint continuity, we obtain a continuous real-valued cocycle
$c\in Z^1(\G_\Cc,\R)$ satisfying $\alpha^H_t=\alpha^c_t$ for all $t\in\R$. ii) Since $\G_\Cc$ is topologically principal \cite[Proposition 5.11]{Ren1}, this is a consequence of Proposition~5.4 in \cite[Chapter~II]{RenaultBook}.
\end{proof}

Now, the next step is to compute all the KMS measures for any finite temperature $\beta.$ Consider the measure $\mu_\beta$ on $\Omega$ by
\begin{equation}\label{eq: KMS measure}
    \mu_\beta\;:=\;\big(\bigotimes_{v\in V}\nu_\beta\big) \otimes  \big(\bigotimes_{\tilde{v}\in \tilde{V}}\tilde{\nu}_\beta\big) 
\end{equation}
where $\nu_\beta$ and $\tilde{\nu}_\beta$ are probability measures on $G$ and $\widetilde{G}$, respectively, which satisfies the following 
\[
\nu_\beta(\chi)
\;:=\;
\begin{cases}
\dfrac{e^{\beta}}{e^{\beta}+(|G|-1)}, & \text{if } \chi = 1_{\widetilde G}, \\[8pt]
\dfrac{1}{e^{\beta}+(|G|-1)}, & \text{if } \chi \neq 1_{\widetilde G},
\end{cases} 
\]
$$\tilde{\nu}_\beta(g)
\;:=\;
\begin{cases}
\dfrac{e^{\beta}}{e^{\beta}+(|G|-1)}, & \text{if } g = 1_G, \\[8pt]
\dfrac{1}{e^{\beta}+(|G|-1)}, & \text{if } g \neq 1_G,
\end{cases}$$
Since the above measures have full support, it follows that $\nu_\beta$ and $\tilde{\nu}_{{\beta}}$ are $G$-$\widetilde{G}$-quasi-invariants, where the Radon-Nikodym derivatives can be explicitly described as
\begin{equation*}
\frac{d(\nu_\beta\circ\tau_\chi^{-1})}{d\nu_\beta}(\chi')
\;=\;
\frac{\nu_\beta(\chi'\chi^{-1})}{\nu_\beta(\chi')},
\qquad
\frac{d(\tilde\nu_\beta\circ\tilde\tau_h^{-1})}{d\tilde\nu_\beta}(g)
\;=\;
\frac{\tilde\nu_\beta(gh^{-1})}{\tilde\nu_\beta(g)}.
\end{equation*}
with $\tau$ and $\tilde{\tau}$ the action by translation of $G$ and $\widetilde{G}$, respectively. Therefore, $\mu_\beta$ is also quasi-invariant, and its Radon-Nikodym  derivative can be written as
\begin{equation*}\label{eq:RN_mu_beta}
\begin{split}
  \frac{d(\gamma_*\mu_\beta )}{d\mu_\beta}(\omega)
\;&=\;
\prod_{v\in \supp(\gamma)\cap V}
\frac{\nu_\beta\!\big(\omega(v)\gamma(v)^{-1}\big)}
     {\nu_\beta\!\big(\omega(v)\big)}
\cdot
\prod_{\tilde v\in \supp(\gamma)\cap \widetilde V}
\frac{\tilde\nu_\beta\!\big(\omega(\tilde v)\gamma(\tilde v)^{-1}\big)}
     {\tilde\nu_\beta\!\big(\omega(\tilde v)\big)} \\
     &=\;e^{-\beta c_H(\omega,\gamma)}
\end{split}
\end{equation*}
where $c_H\colon \G_\Cc\to \R$ is the function given by
\begin{equation}\label{eq: cocycle}
    \begin{split}
        c_H(\omega,\gamma)\;=\;&\sum_{v\in \supp(\gamma)\cap V}\delta_{1_{\widetilde{G}}}\big(\omega(v)\big)-\delta_{1_{\widetilde{G}}}\big(\omega(v)\gamma(v)^{-1}\big)\\
        &+\sum_{\tilde{v}\in \supp(\gamma)\cap \tilde{V}}\delta_{1_G}\big(\omega(\tilde{v})\big)-\delta_{1_G}\big(\omega(\tilde{v})\gamma(\tilde{v})^{-1}\big)
    \end{split}
\end{equation}
A direct computation shows that $c_H(\omega,\gamma_1\gamma_2)=c_H(\omega,\gamma_1)+c_H(\eta_{\gamma_1^{-1}}(\omega),\gamma_2)$ and, as such, $c_H\in Z^1(\G_\Cc,\R)$. In particular, this cocycle implements the dynamics of the Kitaev model.
\begin{proposition}
The Kitaev dynamics $\alpha_t^H$ coincides with the one generated by the cocycle $c_H,$ \ie, $\alpha^H_t=\alpha_t^{c_H}.$ 
\end{proposition}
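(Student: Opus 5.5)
The plan is to check the identity $\alpha^H_t=\alpha^{c_H}_t$ on a generating set of $\Aa$, namely the normalizers $T_c$ and $M_{\tilde c}$ of Proposition~\ref{prop: generators}, together with $\Cc$ itself. Both dynamics act trivially on $\Cc=C(\Omega)$. For $\alpha^H$ this holds because $H_\Lambda\in\Cc$ and $\Cc$ is abelian. For $\alpha^{c_H}$ it holds because $c_H$ vanishes on units, since $\supp(1)=\emptyset$. Since both are strongly continuous one-parameter groups of $*$-automorphisms, it suffices to show they agree on the elementary unitaries $T^{(e)}_g$ and $M^{(e)}_\chi$; products then agree automatically.

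\textbf{Computing $\alpha^H_t$ on an elementary ribbon.} Fix $e$ and $g$. For $\Lambda$ large, only the two faces $\tilde v_1,\tilde v_2$ containing $e$ contribute nontrivially to $[H_\Lambda,T^{(e)}_g]$. By Proposition~\ref{prop: classic interactions}(ii), $T^{(e)}_g C=(C\circ\eta_{\delta^e_g})\,T^{(e)}_g$ for $C\in\Cc$, so
\[
e^{\ii tH_\Lambda}T^{(e)}_g e^{-\ii tH_\Lambda}=e^{\ii t(H_\Lambda-H_\Lambda\circ\eta_{\delta^e_g})}\,T^{(e)}_g .
\]
The same formula holds with the ordering conventions adjusted to match the paper's $\eta$. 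The difference $H_\Lambda-H_\Lambda\circ\eta_{\delta^e_g}$ stabilizes in $\Lambda$ to the finite sum
\[
\sum_{\tilde v\in\supp(\delta^e_g)}\big[(1-\delta_{1_G}(\omega(\tilde v)))-(1-\delta_{1_G}(\omega(\tilde v)\,\delta^e_g(\tilde v)^{\pm1}))\big].
\]
This is exactly $c_H(\omega,\delta^e_g)$ from \eqref{eq: cocycle}, up to the sign and inversion convention. The analogous computation for $M^{(e)}_\chi$ uses part (iii) and the two vertices of $e$.

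\textbf{Matching with the groupoid picture.} Under the isomorphism $\Aa\cong C^*_r(\G_\Cc)$, $T^{(e)}_g$ corresponds to a function supported on the bisection $\{(\omega,\delta^e_g)\}$, possibly multiplied by a twist-free phase, since the twist is trivial. Also, $\alpha^{c_H}_t$ multiplies such a function by $e^{\ii t c_H(\omega,\delta^e_g)}$. The computation above therefore gives $\alpha^H_t(T^{(e)}_g)=\alpha^{c_H}_t(T^{(e)}_g)$, and likewise for $M^{(e)}_\chi$. By Proposition~\ref{prop: automorphism ribbon}, the $\delta$'s generate $\Gamma$, and both $c_H$ and the commutator expression are additive along products. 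So agreement extends to all $T_c$, $M_{\tilde c}$ and then, by density and continuity, to all of $\Aa$.

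As an alternative, Theorem~\ref{theo: KMS measures}(i) already provides some cocycle $c$ with $\alpha^H=\alpha^c$. Since the identification is unique, it then suffices to show $c=c_H$ on the generating bisections, which is the same computation.

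\textbf{Main obstacle.} The hardest step is bookkeeping the conventions. One must get the direction of $\eta_\gamma$ (left versus right, $\gamma$ versus $\gamma^{-1}$ in the source and range maps), the sign of the exponent in $\alpha^c_t$, and the identification of $T^{(e)}_g$ with a groupoid element $(\omega,\gamma)$ versus $(\omega,\gamma^{-1})$. These must all be consistent, so that one gets $+c_H$ rather than $-c_H$ and the cocycle identity stated before the proposition holds in the same convention. I would fix these by checking a single case, $|G|=2$ and one edge, against both formulas before doing the general argument.
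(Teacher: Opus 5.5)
Your proposal is correct and essentially matches the paper's proof. The paper also computes $\alpha^H_t$ on the unitaries implementing $\Gamma$, using that $F_\gamma^*H_\Lambda F_\gamma-H_\Lambda$ stabilizes to $c_H(\cdot,\gamma)$, and extends through the formal Fourier expansion $X=\sum_\gamma X_\gamma F_\gamma$ in $C(\Omega)\rtimes\Gamma$. Your check on the generators $T^{(e)}_g$, $M^{(e)}_\chi$, extended by multiplicativity, is a cleaner finish; and the sign comes out as $+c_H$ if the arrow $(\omega,\delta)$ is read with source $\omega$ and range $\eta_{\delta^{-1}}(\omega)$, since then $(H_\Lambda-H_\Lambda\circ\eta_\delta)(\eta_{\delta^{-1}}(\omega))=c_H(\omega,\delta)$ exactly.
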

\begin{proof}
Recall that $\A\simeq C^*_r(\G_\Cc)\simeq C(\Omega)\rtimes \Gamma$. With respect to the crossed product structure, any $X\in \A$ admits a Fourier decomposition 
$$X\;=\;\sum_{\gamma\in \Gamma}X_\gamma F_\gamma,\qquad X_\gamma\in C(\Omega)$$
where $\gamma\mapsto F_\gamma$ is a unitary representation of $\Gamma.$ Hence, $X(\omega,\gamma)=X_\gamma(\omega)$ and formally 
$$\alpha_t^H(X)\;=\;\sum_{\gamma\in \Gamma}X_\gamma \alpha_t^H(F_\gamma)$$
Thus, it suffices to compute $\alpha_t^H(F_\gamma)$. Let $H_\Lambda\in C(\Omega)$ be the local Hamiltonian associated with a finite region
$\Lambda\subset W$. Since conjugation by $F_\gamma$ implements the action of $\gamma$ on $\Cc$, using remark \ref{rem: hamiltonian} one gets
\[
\big(F_\gamma^* H_\Lambda F_\gamma - H_\Lambda\big)(\omega)
\;=\;c_{H}(\omega|_\Lambda,\gamma){\bf 1}
\]
Since $\gamma$ has finite support, for $\Lambda$ large enough
$c_{H}(\omega|_\Lambda,\gamma)=c_H(\omega,\gamma)$ for all $\omega\in\Omega$. Using this identity, we obtain $
\alpha_t^H(F_\gamma)
=
e^{it c_H(\cdot,\gamma)}\,F_\gamma.$
Consequently, we formally have
\[
\alpha_t^H(X)\;=\;
\sum_{\gamma\in\Gamma} X_\gamma\, \alpha_t^H(F_\gamma)
\;=\;
\sum_{\gamma\in\Gamma} e^{it c_H(\cdot,\gamma)} X_\gamma F_\gamma.
\]
But this is $\alpha_t^H(X)(\omega,\gamma)= e^{it c_H(\omega,\gamma)} X_\gamma(\omega)$, namely $\alpha_t^H=\alpha_t^{c_H},$
as claimed.
\end{proof}
As an immediate consequence of the previous Proposition, one gets:
\begin{corollary}\label{coro: KMS measure}
   The measure $\mu_\beta$, defined in \eqref{eq: KMS measure}, is a $(c_H,\beta)$-KMS measure. Consequently, the induced state $\SM_{\mu_\beta}\equiv \SM_\beta$ is a KMS state for the Kitaev dynamics at inverse temperature $\beta\in[0,\infty)$.
\end{corollary}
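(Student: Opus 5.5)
The plan is to verify the defining condition of a $(c_H,\beta)$-KMS measure directly for $\mu_\beta$, and then apply the correspondence recorded in Theorem~\ref{theo: KMS measures}~ii) (equivalently, Renault's Proposition~5.4) together with the preceding Proposition identifying $\alpha^H$ with $\alpha^{c_H}$.

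\emph{Step 1: quasi-invariance.} Since $\G_\Cc\simeq\Omega\rtimes\Gamma$ is a transformation groupoid, quasi-invariance of $\mu_\beta$ amounts to $\gamma_\ast\mu_\beta\sim\mu_\beta$ for every $\gamma\in\Gamma$. Each $\gamma$ has finite support $S$. Hence $\eta_\gamma$ acts only on the finitely many coordinates in $S$, as a translation in $\prod_{w\in S}$ of copies of $G$ or $\widetilde G$, and leaves the remaining factors of the product measure untouched. On the finite factor both $\mu_\beta|_S$ and its translate have full support, since all weights are strictly positive, so they are equivalent. The density is the finite product of the one-site ratios $\nu_\beta(\omega(v)\gamma(v)^{-1})/\nu_\beta(\omega(v))$ and their dual analogues, as displayed before \eqref{eq: cocycle}. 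This density is a cylinder, hence continuous, function on $\Omega$.

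\emph{Step 2: identification with $e^{-\beta c_H}$.} I will check the one-site identity $\nu_\beta(\chi')/\nu_\beta(\chi)=e^{\beta(\delta_{1}(\chi')-\delta_1(\chi))}$, which holds because $\nu_\beta(\chi)=e^{\beta\delta_{1}(\chi)}/Z$. The same identity holds for $\tilde\nu_\beta$. Taking logarithms and summing over $S$ gives exactly $-\beta c_H(\omega,\gamma)$ with $c_H$ as in \eqref{eq: cocycle}.

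The main point requiring care is conventions: one must check that $\Delta_\mu={\rm d}\lambda_\mu/{\rm d}\lambda^{-1}_\mu$, evaluated on the arrow $(\omega,\gamma)$, equals ${\rm d}(\gamma_\ast\mu)/{\rm d}\mu(\omega)$ with the same orientation (source/range, $\gamma$ versus $\gamma^{-1}$) used in defining $c_H$. If the orientation comes out reversed, I will use the cocycle identity $c_H(\omega,\gamma_1\gamma_2)=c_H(\omega,\gamma_1)+c_H(\eta_{\gamma_1^{-1}}\omega,\gamma_2)$ to move between the two conventions. In particular, $c_H(\eta_{\gamma^{-1}}\omega,\gamma^{-1})=-c_H(\omega,\gamma)$, which fixes the sign. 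The identity then holds for every arrow, not merely $\lambda_{\mu_\beta}$-almost everywhere, so $\mu_\beta$ is a $(c_H,\beta)$-KMS measure. Note that for $\beta=0$ this reduces to the invariance of the Haar–Bernoulli measure $\mu$ used earlier.

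\emph{Step 3: the KMS state.} By Renault's Proposition~5.4, the induced state $\SM_{\mu_\beta}$ of \eqref{eq: induced state2} is an $(\alpha^{c_H},\beta)$-KMS state on $C^*_r(\G_\Cc)\simeq\A$. The induced state uses the unique conditional expectation $E$ of Theorem~\ref{Teo: diagonal}, which matches the groupoid restriction map. Since $\alpha^H_t=\alpha^{c_H}_t$ by the preceding Proposition, $\SM_\beta$ is a $(\alpha^H,\beta)$-KMS state for every $\beta\in[0,\infty)$.
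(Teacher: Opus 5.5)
Your proposal follows the paper's route: the paper likewise writes ${\rm d}(\gamma_\ast\mu_\beta)/{\rm d}\mu_\beta$ as the finite product of one-site ratios, identifies it with $e^{-\beta c_H(\omega,\gamma)}$ via \eqref{eq: cocycle}, and then combines Renault's Proposition~5.4 with $\alpha^H=\alpha^{c_H}$; it simply asserts $\Delta_\mu(\omega,\gamma)={\rm d}(\gamma_\ast\mu)/{\rm d}\mu(\omega)$ and does not carry out the orientation check you flag. One correction to your fallback: the identity $c_H(\eta_{\gamma^{-1}}\omega,\gamma^{-1})=-c_H(\omega,\gamma)$ cannot rescue a reversed orientation, because in that case your computation would give $\Delta_{\mu_\beta}=e^{+\beta c_H}$, so $\mu_\beta$ would be a KMS measure at $-\beta$; the sign is settled only by matching Renault's modular function (heuristically $\mu(r(x))/\mu(s(x))$) with the cocycle that actually implements $\alpha^H_t=e^{itH}(\cdot)e^{-itH}$, which is formally $H\circ r-H\circ s$, and with these matched conventions $\mu_\beta$ (one-site weights proportional to $e^{-\beta(1-\delta_{1})}$) does satisfy $\Delta=e^{-\beta c}$, consistent with $\mu_\beta$ favouring trivial charges for $\beta>0$.
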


We are now in a position to state the main result of this work, which proves the existence and uniqueness of KMS states for all inverse temperatures $\beta$
\begin{theorem}\label{teo: unique kms}
The induced state $\SM_\beta$  by $\mu_\beta$ is the unique KMS state for the abelian Kitaev model at finite temperature $\beta\in [0,\infty).$
\end{theorem}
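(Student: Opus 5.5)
By Theorem~\ref{theo: KMS measures}~ii), every $(\alpha^H,\beta)$-KMS state has the form $\SM_\mu$ for some $(c_H,\beta)$-KMS probability measure $\mu$ on $\Omega$. Corollary~\ref{coro: KMS measure} already gives existence. So it suffices to prove that $\mu_\beta$ is the only probability measure on $\Omega$ with
\[
\frac{{\rm d}(\gamma_\ast\mu)}{{\rm d}\mu}(\omega)=e^{-\beta c_H(\omega,\gamma)}\qquad \text{for all }\gamma\in\Gamma,\ \mu\text{-a.e. }\omega .
\]

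The plan is to show that such a $\mu$ has uniquely determined finite-dimensional marginals. Fix a finite set $F\subset W$ containing at least two vertices of $V$ and at least two faces of $\tilde V$. For $w\in W$ write $\epsilon_w=1$ if $\omega(w)$ is the neutral element and $\epsilon_w=0$ otherwise.

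\textbf{Step 1: cylinder weights are proportional to Boltzmann weights.} Consider the cylinder sets $[\xi]=\{\omega:\omega|_F=\xi\}$. Take two configurations $\xi,\xi'$ on $F$ with equal total charge, meaning $\prod_{v\in F\cap V}\xi(v)=\prod_{v\in F\cap V}\xi'(v)$, and likewise on the faces. Then $\gamma:=\xi'\xi^{-1}$, extended by neutral elements outside $F$, lies in $\Gamma$, and $\eta_\gamma[\xi]=[\xi']$. On $[\xi]$ the cocycle $c_H(\cdot,\gamma)$ is constant, because by \eqref{eq: cocycle} it depends only on $\omega|_F$. Integrating the KMS condition over the cylinder therefore gives
\[
\mu([\xi'])=e^{-\beta c}\,\mu([\xi]),\qquad c=\sum_{w\in F}\bigl(\epsilon_w(\xi)-\epsilon_w(\xi')\bigr),
\]
up to the sign convention of the push-forward. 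Hence within each charge sector $\mu([\xi])$ is proportional to $e^{\beta\sum_{w\in F}\epsilon_w(\xi)}$, which is exactly how $\mu_\beta$ behaves.

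\textbf{Step 2: fixing the weight of each charge sector (the main obstacle).} The measure $\Gamma$ cannot move total charge inside $F$, so Step~1 leaves the relative weights of the $|G|^2$ charge sectors undetermined. I expect this step to be the hardest.

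To handle it, I would enlarge $F$ to $F'=F\cup\{v',\tilde v'\}$ with a new vertex and a new face outside $F$. Charge can now be moved between $F$ and the added sites using elements of $\Gamma$ supported in $F'$.

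Apply Step~1 on $F'$: $\mu([\xi'])$ is proportional to $e^{\beta\sum_{w\in F'}\epsilon_w}$ within each total charge sector of $F'$. The marginal on $F$ is obtained by summing over the values at $v'$ and $\tilde v'$. That sum equals $e^{\beta\sum_F\epsilon_w}$ times a factor $Z_{\rm vert}(a)\,Z_{\rm face}(b)$, where $a,b$ are the charges needed at $v'$ and $\tilde v'$ to complete the total. Here $Z_{\rm vert}(a)=e^\beta$ if $a=1$ and $1$ otherwise, and similarly for $Z_{\rm face}$.

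This still leaves the sector weights of $F'$ free. To kill that freedom I would use two or more disjoint far-away buffers and compare the resulting expressions. Alternatively, and more cleanly, I would take $N$ buffer sites and let $N\to\infty$. The sector weights $p_N(a,b)$ of the enlarged region enter only through a convolution with $\nu_\beta^{\ast N}$ (respectively $\tilde\nu_\beta^{\ast N}$), viewed as measures on the finite groups. Since $\nu_\beta$ has full support, $\nu_\beta^{\ast N}$ converges to the Haar measure. The resulting convolution becomes independent of $p_N$, and this forces the marginal on $F$ to be $\bigotimes_{w\in F}$ of $\nu_\beta$ or $\tilde\nu_\beta$. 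The error is bounded by $\|\nu_\beta^{\ast N}-{\rm Haar}\|\to0$ uniformly over all probability vectors $p_N$.

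\textbf{Step 3: conclusion.} Step~2 shows that all cylinder marginals of $\mu$ agree with those of $\mu_\beta$. By Kolmogorov's extension theorem, $\mu=\mu_\beta$. Hence $\SM_\beta=\SM_{\mu_\beta}$ is the unique KMS state. For $\beta=0$ the same argument yields the Haar (Bernoulli) measure, i.e.\ the trace.
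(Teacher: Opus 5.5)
Your proposal is correct. Step~1 matches the paper, but Step~2 takes a different route.

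Step~1 is what the paper does. You translate cylinders within a fixed charge sector by elements of $\Gamma$ supported in the window, and integrate the KMS condition over a cylinder on which $c_H(\cdot,\gamma)$ is constant.

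The paper then proceeds as follows. It fixes an exhausting sequence $W_n$ and records the $|G|^2$ sector weights $u_n=(\mu(C(n,\chi,g)))$. It derives $u_n=A_\beta u_{n+1}$ with a strictly positive transfer matrix, and uses Perron--Frobenius to force $u_1$ onto the Perron ray. After normalizing, it recovers all $u_n$ by inverting $A_\beta$. That inversion needs Proposition~\ref{prop: invertibility} and forces a separate trace argument at $\beta=0$.

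You instead use the already-constructed KMS measure $\mu_\beta$ as a reference. Step~1 says exactly that $\mu$ and $\mu_\beta$ agree after conditioning on the total charge of any finite window $F_N=F\sqcup B_N$. Writing $p_N(a,b)=\mu(\mathrm{charge}_{F_N}=(a,b))$, this gives
\[
\mu([\xi])=\mu_\beta([\xi])\sum_{(a,b)}p_N(a,b)\,\frac{\nu_\beta^{\ast N}(a\,a_F^{-1})\,\tilde\nu_\beta^{\ast N}(b\,b_F^{-1})}{\mu_\beta(\mathrm{charge}_{F_N}=(a,b))}.
\]
The fraction tends to $1$ uniformly, since numerator and denominator both tend to $|G|^{-2}$. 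For instance, every nontrivial Fourier coefficient of $\nu_\beta$ has modulus $(e^\beta-1)/(e^\beta+|G|-1)<1$. Because $\sum p_N=1$, you get $\mu([\xi])=\mu_\beta([\xi])$.

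Comparing the two routes: yours avoids Perron--Frobenius and the determinant computation, treats $\beta=0$ uniformly, and gives an explicit rate. The paper's gets an explicit finite transfer matrix and identifies $\nu_\beta\otimes\tilde\nu_\beta$ as its Perron vector. Underneath, the mechanism is the same. Up to a positive scalar and conjugation by $\mathrm{diag}(\nu_\beta\otimes\tilde\nu_\beta)$, $A_\beta$ is the convolution operator by $\nu_\beta\otimes\tilde\nu_\beta$. So the Perron--Frobenius contraction of $A_\beta$ is exactly the convergence of convolution powers to Haar measure.

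Two minor points. In Step~3 what you need is uniqueness of a measure determined by its values on the $\pi$-system of cylinder sets, not Kolmogorov's extension theorem. The single-buffer and two-buffer detours in Step~2 can be dropped, since only the $N\to\infty$ argument closes.
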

\begin{proof}
Let $\SM$ be any $\beta$-KMS state for the Kitaev Hamiltonian. The case $\beta=0$ is immediate, since the $0$-KMS condition coincides with the tracial condition and $\A$ has a unique tracial state. We therefore assume $\beta\neq 0$ in the following. By Theorem~\ref{theo: KMS measures}, there exists a
$(\alpha^{c_H},\beta)$-KMS measure $\mu$ on $\Omega$
such that $\SM=\SM_\mu$.
Hence, it suffices to prove that $\mu_\beta$ is the unique $(\alpha^{c_H},\beta)$-KMS measure. 

Any $\mu$ is completely determined by its values on the cylinder sets
\[
C(K,\omega')\;:=\;
\big\{\omega\in \Omega \mid \omega|_K=\omega'|_K\big\}, \quad K\in \Kk(W), \quad \omega' \in \Omega,
\]
and we have
\begin{equation}\label{Eq:Shift}
 \eta_{\gamma}^{-1}(C(K,\omega'))=C(K,\eta_{\gamma^{-1}}(\omega'))   
\end{equation} for all $\gamma \in \Gamma$. Fix enumerations
\[
V\;=\;\{v_1,v_2,\dots\},
\qquad
\widetilde V\;=\;\{\tilde v_1,\tilde v_2,\dots\},
\]
and define $
V_n=\{v_1,\dots,v_n\},$ $
\widetilde V_n=\{\tilde v_1,\dots,\tilde v_n\},$ with
$W_n=V_n\sqcup \widetilde V_n.$ Then $(W_n)_{n\ge1}$ is an increasing filtration of $W$ and $W=\bigcup_{n\ge1}W_n.$ 

For $(\chi,g)\in \widetilde G \times G$, set $\omega_{\chi,g} \in \Omega$ to be the configuration with values
\((\chi,g)\) at $(v_1,\tilde{v}_1)$, and values $1_{\widetilde{G}}$ and $1_G$ at all other coordinates $v_i$ and $\tilde v_i$. Also, 
set
$C(n,\chi,g)=C(W_n,\omega_{\chi,g})$.

Now, consider an arbitrary $n$-cylinder $C(W_n,\omega')$ and set $\prod_{i=1}^n \omega'(v_i)=\chi$ and $\prod_{i=1}^n \omega'(\tilde{v}_i)=g$. Let $\gamma\in\Gamma$ with  support contained in $W_n$ defined as $\gamma(v_1)=\chi\omega'(v_1)^{-1}$, $\gamma(\tilde{v}_1)=g\omega'(\tilde{v}_1)^{-1}$ and $\gamma(w)=\omega'(w)^{-1}$  for all the other $w\in W_n.$ Clearly, $\gamma$ is a well-defined element of $\Gamma$ since $\prod_{i=1}^n \gamma(v_i)=1_{\widetilde{G}}$ and $\prod_{i=1}^n \gamma(\tilde{v}_i)\;=\;1_G$. Note that $\gamma$ is entirely determined by $W_n$ and $\omega'$. It follows from \eqref{Eq:Shift}  that $C(W_n,\omega')=\eta_\gamma^{-1}\big(C(n,\chi,g)\big)$. Furthermore, from \eqref{eq: cocycle}, $c_H(\omega_1,\gamma)=c_H(\omega_2,\gamma)$ if $\omega_i$ coincide on the support of $\gamma$, and therefore $c_H(\omega,\gamma)=c_H(\omega_{\chi,g},\gamma)$ for any $\omega \in C(n,g,\chi)$. Hence, the KMS condition implies
\begin{equation}\label{eq:KMSTrick}
\begin{aligned}
\mu\big(C(W_n,\omega')\big)
& \;=\;\int\nolimits_{C(n,g,\chi)}e^{-\beta c_H(\omega,\gamma)}d\mu(\omega) \\
& \;=\;
e^{-\beta c_H(\omega_{\chi,g},\gamma)}\,\mu\big(C(n,g,\chi)\big).
\end{aligned}
\end{equation}
Therefore, at level n, any $(c_H,\beta)$-KMS measure $\mu$
is completely determined by the $|G|^2$ numbers $
\mu\big(C(n,\chi,g)\big)$.

 Define the vector
$u_n
:=
\bigl(\mu\big(C(n,\chi,g)\big)\bigr)_{g,\chi}
\in \mathbb R^{|G|^2}.$ Observe that
\[
C(n,\chi,g)
\;=\;
\bigcup_{\chi',g'}
C(n+1,\chi,g;\chi',g'),
\]
where $
C(n+1,\chi,g;\chi',g')=C(W_{n+1}, \omega_{\chi,g}^{\chi',g'})$, with  $\omega_{\chi,g}^{\chi',g'}\in \Omega$ denoting the configuration that takes the values \((\chi,g)\) at $(v_1,\tilde{v}_1)$, \((\chi',g')\) at $(v_{n+1},\tilde{v}_{n+1})$, and the neutral values $1_{\widetilde{G}}$ and $1_G$ at all other coordinates $v_i$ and $\tilde v_i$. These cylinders are disjoint, so one obtains
\[
\mu\big(C(n,g,\chi)\big)
\;=\;
\sum_{g',\chi'}
\mu\big(C(n+1,g,\chi;g',\chi')\big).
\]
Using \eqref{eq:KMSTrick}, one verifies that
\[
\mu\big(C(n+1,\chi,g;\chi',g')\big)
\;=\;
A_\beta(\chi,g;\zeta,h)\,
\mu\big(C(n+1,\zeta,h)\big),
\]
where $h=gg'$ and $\zeta=\chi\chi'$,
and 
\[
A_\beta(\chi,g;\zeta,h)
\;=\;
e^{-\beta\big(1+\delta_{1_G}(h)-\delta_{1_G}(g)-\delta_{1_G}(g^{-1}h)\big)}
\,
e^{-\beta\big(1+\delta_{1_{\widetilde G}}(\zeta)-\delta_{1_{\widetilde G}}(\chi)-\delta_{1_{\widetilde G}}(\chi^{-1}\zeta)\big)}
\]
Thus, $u_n=A_\beta\,u_{n+1},$
where the matrix $A_\beta=\big(A_\beta(\chi,g;\zeta,h)\big)_{\chi,g;\zeta,h}\in M_{|G|^2}(\mathbb R)$
is independent of $n$. As a consequence, $u_1=A_\beta^nu_n$ and 
\begin{equation}\label{eq: interesection}
   u_1\in \bigcap_{n\geq 1}A_\beta^n(\R^{|G|^2}_+) 
\end{equation}
Since $A_\beta$ has strictly positive entries, the Perron-Frobenius theorem implies that there exists a unique strictly positive eigenvector $u$ of $A_\beta$. By \eqref{eq: interesection} and the Perron--Frobenius convergence theorem, $
u_1 = \lambda u$ for some $\lambda > 0.$ Moreover, $\mu$ is a probability measure and the cylinders at level $1$
form a partition of $\Omega$, hence $
\sum_i (u_1)_{(i)} = 1.$
This normalization uniquely determines $\lambda$.
Consequently, $u_1$ is uniquely determined. Moreover,  the sequence $(u_n)_n$ is also uniquely determined since $A_\beta$ is invertible by Proposition \ref{prop: invertibility} and $u_n=A_{\beta}^{-n}u_1.$ Thus,  $\mu$ is unique since all its information is encoded in the sequence $(u_n)_n$. This completes the proof and verifies that $\mu=\mu_\beta.$
\end{proof}

\begin{remark}
  It follows from the proof of Theorem \ref{teo: unique kms} that the measure $\nu_\beta\otimes \tilde{\nu}_\beta$ on $G\times \widetilde{G}$ coincides, up to normalization, with the unique Perron-Frobenius eigenvector of the matrix $A_\beta.$ $\hfill \blacktriangleleft $
\end{remark}

It is possible to compute explicitly the expectation value of the  projections $P^\chi_v$ and $P^g_{\tilde v}$ under the KMS state $\SM_\beta$. Indeed, one obtains
$$\SM_\beta(P_v^\chi)\;=\;\frac{e^\beta}{e^\beta+(|G|-1)}\;=\;\SM_\beta(P_{\tilde v}^g),\qquad g= 1_G,\;\chi=1_{\widetilde{G}}$$
$$\SM_\beta(P_v^\chi)\;=\;\frac{1}{e^\beta+(|G|-1)}\;=\;\SM_\beta(P_{\tilde v}^g),\qquad g\neq 1_G,\;\chi\neq 1_{\widetilde{G}}$$
\begin{proposition}
The zero-temperature limit of the family of KMS states $(\SM_\beta)_{\beta\in [0,\infty)}$ converges to the unique frustration-free ground state $\SM_{\omega_0}$ of the Kitaev model. That is,
\[\text{weak}^
*\!-\!\lim_{\beta\to\infty}\SM_\beta
\;=\;
\SM_{\omega_0}.
\]
\end{proposition}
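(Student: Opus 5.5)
The plan is to transfer the problem from $\Aa$ to the commutative algebra $\Cc\simeq C(\Omega)$ via the conditional expectation, and to use the fact that the product measures $\mu_\beta$ concentrate on $\omega_0$ as $\beta\to\infty$.

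By \eqref{eq: induced state2} and \eqref{eq: pure extension}, for every $X\in\A$ we have
\[
\SM_\beta(X)\;=\;\int_\Omega E(X)(\omega)\,{\rm d}\mu_\beta(\omega),
\]
where $E(X)\in C(\Omega)$ is continuous. So the weak$^\ast$ convergence of $\SM_\beta$ reduces to the weak$^\ast$ convergence of the probability measures $\mu_\beta$ on $\Omega$. Indeed, if $\mu_\beta\to\delta_{\omega_0}$ weakly, then
\[
\SM_\beta(X)\;\to\; E(X)(\omega_0)\;=\;\SM_{\omega_0}(X).
\]
Here $\SM_{\omega_0}$ is the unique pure extension of $\omega_0$, which is the frustration-free ground state by Proposition~\ref{prop: FF} together with Theorems~\ref{Teo: unique FF GS} and~\ref{teo: unique gs}.

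Next we show that $\mu_\beta\to\delta_{\omega_0}$. Since $\Omega$ is a compact product space, it suffices to test against cylinder functions. These are dense in $C(\Omega)$ by Stone--Weierstrass, and a uniform approximation argument extends the convergence to all continuous functions because all measures involved are probabilities. Take a cylinder set $C(K,\omega')$ with $K\in\Kk(W)$. By the product structure \eqref{eq: KMS measure},
\[
\mu_\beta\big(C(K,\omega')\big)\;=\;\prod_{w\in K}\nu_\beta^{(w)}\big(\omega'(w)\big),
\]
where $\nu_\beta^{(w)}$ stands for $\nu_\beta$ or $\tilde\nu_\beta$ according to whether $w\in V$ or $w\in\tilde V$. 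As $\beta\to\infty$, each factor tends to $1$ if $\omega'(w)$ is the neutral element and to $0$ otherwise, since $e^\beta/(e^\beta+|G|-1)\to1$ and $1/(e^\beta+|G|-1)\to0$. Because $K$ is finite, the product converges to $1_{C(K,\omega')}(\omega_0)=\delta_{\omega_0}\big(C(K,\omega')\big)$.

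Cylinder sets are clopen, so their indicator functions are continuous, and their linear span is dense in $C(\Omega)$. This gives $\int f\,{\rm d}\mu_\beta\to f(\omega_0)$ for every $f\in C(\Omega)$.

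The only delicate point is the identification $E(X)(\omega_0)=\SM_{\omega_0}(X)$, which relies on the UEP and on \eqref{eq: pure extension}. Since these were established earlier, no real obstacle remains. The proof closes by noting that $\omega_0$ is the configuration minimizing the classical Hamiltonian of Remark~\ref{rem: hamiltonian}, so its extension is exactly the frustration-free ground state of the standard model.
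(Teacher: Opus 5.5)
Your proposal is correct and follows the paper's route. The paper's one-line proof notes that the single-site factors $\nu_\beta,\tilde\nu_\beta$ converge to the point masses at the neutral elements, so $\mu_\beta\to\delta_{\omega_0}$ weakly and hence $\SM_\beta=\int_\Omega E(\cdot)\,{\rm d}\mu_\beta\to E(\cdot)(\omega_0)=\SM_{\omega_0}$. You simply make explicit what the paper leaves implicit: the cylinder-set computation for the product measure, the density of locally constant functions in $C(\Omega)$, and the identification $E(X)(\omega_0)=\SM_{\omega_0}(X)$ via \eqref{eq: pure extension} and the UEP.
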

\begin{proof}
This is a direct consequence of that when $\beta\to\infty$, one has
\[
\frac{e^\beta}{e^\beta+(|G|-1)} \longrightarrow 1,
\qquad
\frac{1}{e^\beta+(|G|-1)} \longrightarrow 0.
\]
Hence $
\nu_\beta \longrightarrow \delta_{1_{\widehat G}},$
 and $\tilde\nu_\beta \longrightarrow \delta_{1_G},$ in the weak topology of probability measures. 
\end{proof}

We finish this section by verifying that the matrix $A_\beta$ is invertible.
\begin{proposition}\label{prop: invertibility}
    The matrix $A_\beta$ is invertible for any $\beta\neq 0.$
\end{proposition}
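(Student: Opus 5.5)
The plan is to exploit the product structure of $A_\beta$. From its explicit formula, the entry $A_\beta(\chi,g;\zeta,h)$ factors as $a_\beta(g,h)\,\tilde a_\beta(\chi,\zeta)$, where
\[
a_\beta(g,h)\;=\;e^{-\beta\big(1+\delta_{1_G}(h)-\delta_{1_G}(g)-\delta_{1_G}(g^{-1}h)\big)},
\]
and $\tilde a_\beta$ is the analogous matrix on $\widetilde G$. Hence $A_\beta=a_\beta\otimes\tilde a_\beta$ as an element of $M_{|G|}(\R)\otimes M_{|\widetilde G|}(\R)$. Since $\det(a\otimes b)=\det(a)^{m}\det(b)^{k}$ for $a\in M_k$ and $b\in M_m$, it suffices to show that $a_\beta$ and $\tilde a_\beta$ are each invertible. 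The two cases are identical, with $\widetilde G$ in place of $G$ (and $|\widetilde G|=|G|$), so I will only treat $a_\beta$.

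Next I separate the exponent according to its dependence on the row index, the column index, and the group difference. Set $D_\beta:=\mathrm{diag}\big(e^{\beta\delta_{1_G}(g)}\big)_{g\in G}$, which is invertible. Then
\[
a_\beta\;=\;e^{-\beta}\,D_\beta\,K_\beta\,D_\beta^{-1},\qquad K_\beta(g,h)\;:=\;e^{\beta\delta_{1_G}(g^{-1}h)},
\]
so $a_\beta$ is similar to a nonzero multiple of $K_\beta$, and the question reduces to the invertibility of $K_\beta$.

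Now observe that $K_\beta(g,h)$ equals $e^\beta$ when $g=h$ and equals $1$ otherwise. Therefore $K_\beta=J+(e^\beta-1)\mathbf 1$, where $J$ is the all-ones $|G|\times|G|$ matrix. The eigenvalues of $J$ are $|G|$, on the constant vector, and $0$ with multiplicity $|G|-1$. Consequently the spectrum of $K_\beta$ is
\[
\{\,e^\beta-1+|G|,\; e^\beta-1\,\},
\]
which gives
\[
\det K_\beta\;=\;(e^\beta-1+|G|)\,(e^\beta-1)^{|G|-1}.
\]
The first factor is strictly positive for every real $\beta$, because $|G|\ge 2$. The second factor vanishes exactly when $\beta=0$, and $|G|-1\ge1$ since $G$ is non-trivial. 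Hence $K_\beta$, and with it $a_\beta$, $\tilde a_\beta$ and $A_\beta$, is invertible for all $\beta\neq0$.

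The main point needing care is the first step: confirming the clean tensor factorization $A_\beta=a_\beta\otimes\tilde a_\beta$ from the formula in the proof of Theorem~\ref{teo: unique kms}, and fixing a consistent ordering of the index set $\widetilde G\times G$ so that the Kronecker product identity applies. Once that is settled, the remaining steps are the similarity transformation and the eigenvalue count for $J+(e^\beta-1)\mathbf 1$, both of which are routine. As a byproduct, the computation also shows that $A_\beta$ fails to be invertible at $\beta=0$, which is consistent with the exclusion of that case in the statement.
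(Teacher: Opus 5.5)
Your argument is correct. Its first step matches the paper: the splitting $A_\beta=B\otimes D$ into a $G$-factor and a $\widetilde G$-factor. The difference is in how you handle the $G$-factor (the paper's $B$, your $a_\beta$).

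\textbf{The paper's route.} It tabulates the entries of $B$ in terms of $q=e^{-\beta}$. It then writes $B$ in block form around the $(|G|-1)\times(|G|-1)$ block $(1-q)I+qJ$, and evaluates $\det B$ by a Schur complement, which requires inverting that block.

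\textbf{Your route.} You observe that, up to the constant $-\beta$, the exponent has the form $\phi(g)-\phi(h)+\psi(g^{-1}h)$ with $\phi=\psi=\beta\delta_{1_G}$. The $\phi$-part is a coboundary, and your diagonal similarity removes it. What remains is the translation-invariant kernel $J+(e^\beta-1)I$, whose spectrum can be read off directly.

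The two computations agree: $e^{-\beta|G|}(e^\beta-1+|G|)(e^\beta-1)^{|G|-1}=(1-q)^{|G|-1}\bigl(1+(|G|-1)q\bigr)$, which is exactly the paper's $\det B$.

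\textbf{What your route buys.} It is shorter and needs no auxiliary inversion. It also gives more than the determinant. $B$ has eigenvalues $1+(|G|-1)q$ (simple) and $1-q$ (multiplicity $|G|-1$). Its Perron eigenvector is your diagonal matrix applied to the all-ones vector, namely $(e^{\beta\delta_{1_G}(g)})_{g\in G}$, which is proportional to $\tilde\nu_\beta$. Combined with the same statement for the $\widetilde G$-factor, this shows directly that $\nu_\beta\otimes\tilde\nu_\beta$ is the Perron--Frobenius eigenvector of $A_\beta$, with eigenvalue $\bigl(1+(|G|-1)q\bigr)^2$. That confirms the remark after Theorem~\ref{teo: unique kms}, which the Schur-complement computation does not exhibit.
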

\begin{proof}
    Observe that $A_\beta(g,\chi;h,\zeta)
=B(g,h)\,D(\chi,\zeta)$, where
\[
B(g,h)
\;=\;
e^{-\beta\bigl(1+\delta_{1_G}(h)-\delta_{1_G}(g)-\delta_{1_G}(g^{-1}h)\bigr)},\;
D(\chi,\zeta)
\;=\;
e^{-\beta\bigl(1+\delta_{1_{\widetilde G}}(\zeta)-\delta_{1_{\widetilde G}}(\chi)-\delta_{1_{\widetilde G}}(\chi^{-1}\zeta)\bigr)}
\]
Hence $
A_\beta=B\otimes D$.   Set $q=e^{-\beta}$. A direct computation shows that\[
B(1_G,k)=1, \quad \forall k\in G,
\]
and for $g\neq 1_G$,
\[
B(g,1_G)=q^2,\qquad B(g,g)=1,\qquad
B(g,k)=q \quad (k\neq 1_G,g).
\]
Ordering the elements of $G$ with the identity first, the matrix $B$ takes the form
\[
B\;=\;
\begin{pmatrix}
1 & 1 & \cdots & 1\\
q^2 & 1 & \cdots & q\\
\vdots & \vdots & \ddots & \vdots\\
q^2 & q & \cdots & 1
\end{pmatrix}.
\]
Let $N:=|G|$. The matrix $B$ has the block form
\[
B\;=\;
\begin{pmatrix}
1 & \mathbf{1}^\top\\
q^2\mathbf{1} & C
\end{pmatrix},
\]
where $\mathbf{1}\in\mathbb{R}^{N-1}$ denotes the column vector of ones and
\[
C\;=\;
\begin{pmatrix}
1 & q & \cdots & q\\
q & 1 & \cdots & q\\
\vdots & \vdots & \ddots & \vdots\\
q & q & \cdots & 1
\end{pmatrix}
\;=\;(1-q)I_{N-1}+qJ_{N-1},
\]
with $J_{N-1}$ the $(N-1)\times (N-1)$ matrix of ones. Since $J_{N-1}\mathbf{1}=(N-1)\mathbf{1}$ and $J_{N-1}x=0$ for every $x\perp \mathbf{1}$, the eigenvalues of $C$ are
$1+(N-2)q$ and $1-q$ with multiplicity $N-2$. Hence $
\det(C)\;=\;(1-q)^{N-2}\bigl(1+(N-2)q\bigr),$
and moreover for $\beta\neq 0$
\[
C^{-1}\mathbf{1}\;=\;\frac{1}{1+(N-2)q}\mathbf{1}.
\]
Applying the Schur complement formula, we obtain
\[
\det(B)
\;=\;
\det(C)\,\det\!\bigl(1-\mathbf{1}^\top C^{-1}q^2\mathbf{1}\bigr).
\]
Since
\[
\mathbf{1}^\top C^{-1}\mathbf{1}
\;=\;
\frac{N-1}{1+(N-2)q},
\]
it follows that
\[
\det(B)
\;=\;
\det(C)\left(1-\frac{(N-1)q^2}{1+(N-2)q}\right).
\]
Substituting the value of $\det(C)$ and simplifying, we obtain
\[
\det(B)
\;=\;
(1-q)^{N-1}\bigl(1+(N-1)q\bigr)\;=\;(1-q)^{|G|-1}\bigl(1+(|G|-1)q\bigr).
\]
In particular, $B$ is invertible whenever $\beta\neq 0$. The same argument applies to $D$, yielding
\[
\det(D)\;=\;(1-q)^{|\widetilde G|-1}\bigl(1+(|\widetilde G|-1)q\bigr),
\]
so $D$ is also invertible for $\beta\neq 0$. Because $A_\beta=B\otimes D$, then $A_\beta$ is invertible for all $\beta\neq 0.$
\end{proof}

\section*{Declarations}
The authors have no conflicting or competing interests to declare that are relevant to the content of this article. No data was produced or used for this work.

\end{document}